\numberwithin{equation}{section}
\theoremstyle{plain}
\newtheorem{theorem}{Theorem}
\newtheorem{lemma}[theorem]{Lemma}
\newtheorem{claim}[theorem]{Claim}
\theoremstyle{definition}
\theoremstyle{enumtheo}
\newtheoremstyle{break}
  {\topsep}
  {\topsep}
  {}
  {}
  {\bfseries}
  {.}
  {\newline}
  {}
\theoremstyle{break}
\newtheorem{algo}[theorem]{Algorithm}
\newenvironment{algorithm}[3]
        {\begin{boxedminipage}{\textwidth}\begin{algo}[#1]
        {\begin{tabular}{r l}
        \textbf{Intput} & #2\\
        \textbf{Output} & #3
        \end{tabular}\par\enskip}}
        {\end{algo}\end{boxedminipage}\newline}
\theoremstyle{remark}
\newtheorem{remark}[theorem]{Remark}
\theoremstyle{corollary}
\newtheorem{corollary}[theorem]{Corollary}
\DeclareSymbolFont{extraup}{U}{zavm}{m}{n}
\DeclareMathSymbol{\varheart}{\mathalpha}{extraup}{86}
\DeclareMathSymbol{\vardiamond}{\mathalpha}{extraup}{87}
\def\tif{\text{ if }}
\def\tand{\text{ and }}
\def\RR{\mathbb{R}}
\def\Par{\mathcal{P}}
\def\matr#1{\mathsf{#1}}
\def\vecc#1{\boldsymbol{#1}}
\def\ee{\varepsilon}
\def\eps{\epsilon}
\def\rdim{{\cal R}_{diam}}
\DeclareMathOperator{\Exp}{\mathbb E}
\DeclareMathOperator{\Pp}{\mathbb P}
\DeclareMathOperator{\Ee}{\mathcal E}
\DeclareMathOperator{\ddeg}{deg}
\DeclareMathOperator{\OO}{\mathcal O}
\DeclareMathOperator{\OOs}{\widetilde{\mathcal O}}
\DeclareMathOperator{\plog}{polylog}
\DeclareMathOperator{\vol}{vol}
\DeclareMathOperator{\Diff}{\Delta}
\DeclareMathOperator{\poly}{poly}
\DeclarePairedDelimiter\set{\lbrace}{\rbrace}
\DeclarePairedDelimiter\parens{\lparen}{\rparen}
\DeclarePairedDelimiter\norm{\|}{\|}
\DeclarePairedDelimiter\Abs{|}{|}
\DeclarePairedDelimiter\inpr{\langle}{\rangle}
\DeclareMathOperator{\Reff}{\mathsf{Reff}}
\def\RRn{\RR_{\ge 0}}
\def\2Lin{\mathsf{2Lin}}
\begin{document}
\author{
Vedat Levi Alev\footnote{Supported by the David R.~Cheriton Graduate Scholarship. E-mail: \href{mailto:vlalev@uwaterloo.ca}{vlalev@uwaterloo.ca}}
\and Nima Anari \footnote{ Email: \href{mailto:anari@stanford.edu}{anari@stanford.edu}}
\and Lap Chi Lau\footnote{Supported by NSERC Discovery Grant 2950-120715 and NSERC Accelerator Supplement 2950-120719. E-mail: \href{mailto:lapchi@uwaterloo.ca}{lapchi@uwaterloo.ca}}
\and Shayan Oveis Gharan\footnote{Supported by NSF Career Award and  ONR Young Investigator Award, Email: \href{mailto:shayan@cs.washington.edu}{shayan@cs.washington.edu}
}
}
\title{Graph Clustering using Effective Resistance}
\date{}
\maketitle
\begin{abstract}
  We design a polynomial time algorithm that for any weighted undirected graph $G = (V, E,\vecc w)$  and sufficiently large $\delta > 1$,
  partitions $V$ into subsets
  $V_1, \ldots, V_h$ for some $h\geq 1$, such that
  \begin{itemize}
  \item at most $\delta^{-1}$ fraction of the weights are between clusters, i.e.
    \[ w(E - \cup_{i = 1}^h E(V_i)) \lesssim \frac{w(E)}{\delta};\]
  \item the effective resistance diameter of each of the induced
    subgraphs $G[V_i]$ is at most $\delta^3$ times the average weighted degree, i.e.
    \[ \max_{u, v \in V_i} \Reff_{G[V_i]}(u, v) \lesssim
    \delta^3 \cdot \frac{|V|}{w(E)} \quad \text{ for all } i=1, \ldots, h.\]
  \end{itemize}
  In particular, it is possible to remove one percent of weight of
  edges of any given graph such that each of the resulting connected
  components has effective resistance diameter at most the inverse of
  the average weighted degree.

  Our proof is based on a new connection between effective resistance and low conductance sets.
  We show that if the effective resistance between two vertices $u$ and $v$ is large, then there must be a low conductance cut separating $u$ from $v$.
  This implies that very mildly expanding graphs have constant effective resistance diameter.
  We believe that this connection could be of independent interest in algorithm design.
\end{abstract}
\section{Introduction}

Graph decomposition is a useful algorithmic primitive with various applications.
The general framework is to remove few edges so that the remaining components have nice properties, and then  specific problems are solved independently in each component.
Several types of graph decomposition results have been studied in the literature.
The most relevant to this work are low diameter graph decompositions and expander decompositions.
We refer the reader to Section~\ref{sec:prelim} for notation and definitions.

\paragraph{Low Diameter Graph Decompositions:}
Given a weighted undirected graph $G = (V, E, \vecc w)$ and a parameter
$\Delta > 0$, a low diameter graph decomposition algorithm seeks to
partition the vertex set $V$ into sets $V_1, \ldots, V_h$ with the following two properties:
\begin{itemize}
\item Each component $G[V_i]$ has bounded shortest path diameter, i.e.~$\max_{u, v \in V_i} {\rm dist}_w(u, v) \le \Delta$,
where ${\rm dist}_w(u,v)$ is the shortest path distance between $u$ and $v$ using the edge weight $\vecc w$.
\item There are not too many edges between the sets $V_i$, i.e.~$\Abs*{E - \bigcup_{i = 1}^h E(V_i) } \le \frac{D(G)}{\Delta} \cdot |E|$,
where $D(G)$ is the ``distortion'' that depends on the input graph.
\end{itemize}
This widely studied \cite{LinialSaks, KleinPlotkinRao, Bartal1,
  LeeSid, CopsRobbers} primitive (and its generalization to
decomposition into padded partitions) has been very useful in
designing approximation algorithms~\cite{DirectedSteiner, 0ext1,
  0ext2, FHL08, RoughgardenKrauthgamer, BansalFeigeKrauthgamer,LOT14}.  This
approach is particularly effective when the input graph is of bounded
genus $g$ or $K_r$-minor free, in which case
$D(G) = \OO(\log g)$~\cite{LeeSid} and
$D(G) = \OO(r)$~\cite{CopsRobbers}.  For these special graphs, this
primitive can be used to proving constant flow-cut
gaps~\cite{KleinPlotkinRao}, proving tight bounds on the Laplacian
spectrum~\cite{BiswalLeeRao, HigherEigenvalues}, and obtaining
constant factor approximation algorithms for NP-hard
problems~\cite{BansalFeigeKrauthgamer,AlevLau}.  However, there are
graphs for which $D(G)$ is necessarily $\Omega(\log n)$
where $n$ is the number of vertices, and this
translates to a $\Omega(\log n)$ factor loss in applying this approach
to general graphs.  For example, in a hypercube, if we only delete a
small constant fraction of edges, some remaining components will have
diameter $\Omega(\log n)$.

\paragraph{Expander Decompositions:}
Given an undirected graph $G = (V, E)$ and a parameter $\phi > 0$,
an expander decomposition algorithm seeks to partition the vertex set $V$
into sets $V_1, \ldots, V_h$ with the following two properties.
\begin{itemize}
\item Each component $G[V_i]$ is a $\phi$-expander, i.e.~$\Phi(G[V_i]) \ge \phi$, where $\Phi(G[V_i])$ is the conductance of the induced subgraph $G[V_i]$; see Section~\ref{sec:prelim} for the definition of conductance.
\item There are not too many edges between the sets $V_i$,
  i.e.~$\Abs*{E - \bigcup_{i = 1}^h E(V_i) } \le \delta(G, \phi) \cdot |E|$, where $\delta(G,\phi)$ is a parameter depending on the graph $G$ and $\phi$.
\end{itemize}
This decomposition is also well studied~\cite{KannanVempalaVetta, SpecSpars, AroraBarakSteurer, ShayanLuca}, and is proved useful in solving Laplacian equations, approximating Unique Games, and designing clustering algorithms.
It is of natural interest to minimize the parameter $\delta(G, \phi)$.
Similar to the low diameter partitioning case,
there are graphs where $\delta(G, \phi) \ge \Omega(\phi \cdot \log(n) )$.
For example, in a hypercube, if we  delete a small constant fraction of edges, some remaining components will have conductance $\OO(1/\log n)$.

\paragraph{Motivations:} In some applications, we could not afford to have
an $\Omega(\log n)$ factor loss in the approximation ratio.  One
motivating example is the Unique Games problem.  It is known that
Unique Games can be solved effectively in graphs with constant
conductance~\cite{UGexpander} and more generally in graphs with low
threshold rank~\cite{Kolla, GuruswamiSinop,BarakRaghavendraSteurer},
and in graphs with constant diameter~\cite{GuptaTalwar}.  Some
algorithms for Unique Games on general graphs are based on graph
decomposition results that remove a small constant fraction of edges
so that the remaining components are of low threshold
rank~\cite{AroraBarakSteurer} or of low diameter~\cite{AlevLau}, but
the $\Omega(\log n)$ factor loss in the decomposition is the
bottleneck of these algorithms.  This leads us to the question of
finding a property that is closely related to low diameter and high
expansion, so that every graph admits a decomposition into components
with such a property without an $\Omega(\log n)$ factor loss.

\paragraph{Effective Resistance Diameter:}
The property that we consider in this paper is having low effective
resistance diameter.  We interpret the graph $G = (V, E, \vecc w)$ as an
electrical circuit by viewing every edge $e \in E$ as a resistor with
resistance $1/w(e)$. The effective resistance distance $\Reff(u, v)$
between the vertices $u$ and $v$ is then the potential difference
between $u$ and $v$ when injecting a unit of electric flow into the
circuit from the vertex $u$ and removing it out of the circuit from
the vertex $v$.  We define
\[ \rdim(G) :=\max_{u,v \in V} \Reff(u,v)\]
as the effective resistance diameter of $G$.  Both the properties of
low diameter and of high expansion have the property of low effective
resistance diameter as a common denominator:
The effective resistance distance $\Reff(u, v)$ is
upper bounded by the shortest path distance for any graph, and so every
low diameter component has low effective resistance diameter.
Also, a $d$-regular graph with constant expansion has effective resistance diameter $\OO(1/d)$ \cite{BK89,ChandraEtAl}, and so an expander graph also has low effective resistance diameter.
See~Section~\ref{sec:prelim} for more details.

In this paper, we study the connection between effective resistance
and graph conductance.  Roughly speaking, we show if all sets have
mild expansion (see \autoref{thm:wtshnim}), then the effective
resistance diameter is small.  We use this observation to design a
graph partitioning algorithm to decompose a graph into clusters with
effective resistance diameter at most the inverse of the average
degree (up to constant losses) while removing only a constant fraction
of edges.  This shows that although we cannot partition a graph into
$\Omega(1)$-expanders by removing a constant fraction of edges, we can
partition it into components that satisfy the ``electrical
properties'' of expanders.

\paragraph{Applications of Effective Resistance:} Besides the motivation
from the Unique Games problem, we believe that effective resistance is
a natural property to be investigated on its own.  The effective
resistance distance between two vertices $u, v \in V$ has many useful
probabilistic interpretations, such as the commute
time~\cite{ChandraEtAl}, the cover time~\cite{Matthews}, and the
probability of an edge being in a random spanning
tree~\cite{Kirchoff}.  See Section~\ref{sec:prelim} for more details.
Recently, the concept of effective resistance has found surprising
applications in spectral sparsification~\cite{SpielmanSrivastava}, in
computing maximum flows~\cite{CKMST}, in finding thin
trees~\cite{NimaShayan}, and in generating random spanning
trees~\cite{MadryKelner,MadryStraszakTarnawski,DurfeeEtAl}.  The
recent algorithms in generating a random spanning tree are closely
related to our work.  Madry and Kelner~\cite{MadryKelner} showed how
to sample a random spanning tree in time $\OOs(m \cdot \sqrt n)$
where $m$ is the number of edges,
faster than the worst case cover time $\OOs(m \cdot n)$
(see Section~\ref{sec:prelim}). A cruicial ingredient of
their algorithm is the low diameter graph decomposition technique,
which they use to ensure that the resulting components have small
cover time.  In subsequent work, Madry, Straszak and Tarnawski
\cite{MadryStraszakTarnawski} have improved the time complexity of
their algorithm to $\OOs(m^{4/3})$ by working with the effective
resistance metric instead of the shortest path metric.  Indeed, their
technique of reducing the effective resistance diameter is similar to
our technique -- even though it cannot recover our result.

\subsection{Our Results}

Our main technical result is the following connection between effective resistance and graph partitioning.

\begin{restatable}{theorem}{wtshnim}\label{thm:wtshnim}
  Let $G = (V, E)$ be a weighted graph with weights
  $\vecc w \in \RRn^E$. Suppose for any set $S \subseteq V$ with
  $\vol(S) \le \vol(G)/2$ we have
  \begin{equation}
    \Phi(S) \ge \frac{c}{\vol(S)^{1/2 - \ee}} \label{eq:wbdbd}\tag{mild expansion}
  \end{equation}
  for some $c > 0$ and $1/2 \ge \ee \ge 0$.
Then, for any pair of vertices $s, t \in V$, we have
  \begin{equation}
    \Reff(s, t) \lesssim \parens*{\frac{1}{\deg(s)^{2\ee}} + \frac{1}{\deg(t)^{2\ee}  }} \cdot \frac{1}{\ee \cdot c^2}, \tag{resistance bound}
  \end{equation}
where $\deg(v)=\sum_{u: uv\in E} w(u,v)$ is the weighted degree of $v$.
\end{restatable}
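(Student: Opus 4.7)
I would work with the electric potential $\phi \colon V \to \RR$ solving $L \phi = e_s - e_t$, so that $R := \Reff(s, t) = \phi(s) - \phi(t) = \mathcal{E}(\phi, \phi)$, where $\mathcal{E}(f, f) := \sum_{uv \in E} w(uv) (f(u) - f(v))^2$. After shifting $\phi$ by a weighted (by $\deg$) median so that $\vol(\{\phi > 0\}) \le \vol(G)/2$ and $\vol(\{\phi < 0\}) \le \vol(G)/2$, and swapping $s$ and $t$ if needed, I may assume $\phi(s) \ge 0 \ge \phi(t)$. It then suffices to prove the one-sided estimate $\phi(s) \lesssim \sqrt{R}/(c \sqrt{\ee}\, \deg(s)^\ee)$, together with its symmetric analogue for $-\phi(t)$; adding the two, squaring, and solving for $R$ delivers the claimed bound.

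The main lemma is a local Cheeger/Poincar\'e inequality extracted from mild expansion: for every $g \ge 0$ with $V_0 := \vol(\mathrm{supp}\, g) \le \vol(G)/2$,
\[ \mathcal{E}(g, g) \;\gtrsim\; \frac{c^2 \, \|g\|_{L^2(\deg)}^2}{V_0^{\, 1 - 2\ee}}, \qquad \|g\|_{L^2(\deg)}^2 := \sum_v \deg(v)\, g(v)^2. \qquad (\star) \]
This comes from the familiar ``$g^2$-trick'': the coarea formula plus mild expansion yields $\sum_e w(e) |g^2(u) - g^2(v)| \ge c \int_0^\infty \vol(\{g^2 > \tau\})^{1/2 + \ee}\, d\tau$, which is at least $c\, V_0^{-(1/2 - \ee)} \|g\|_{L^2(\deg)}^2$ since $\vol(\{g^2 > \tau\}) \le V_0$; on the other hand Cauchy--Schwarz on $|g^2(u) - g^2(v)| = |g(u) - g(v)|(g(u) + g(v))$ bounds the same quantity by $\sqrt{2\, \mathcal{E}(g, g)}\, \|g\|_{L^2(\deg)}$, and $(\star)$ follows.

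A single application of $(\star)$ to $\phi_+$ would overshoot by a factor of $1/\ee$. To recover the optimal dependence I would dyadically decompose the $s$-side: set $V^{(j)} := 2^j \deg(s)$ for $j = 0, 1, \dots, J$ with $V^{(J)} \ge \vol(G)/2$, and pick $\phi(s) = \tau^{(0)} > \tau^{(1)} > \cdots > \tau^{(J)} = 0$ with $\vol(\{\phi \ge \tau^{(j)}\}) \approx V^{(j)}$. Define the $j$-th slice
\[ \phi^{(j)}(v) \;:=\; \bigl(\min(\phi(v), \tau^{(j-1)}) - \tau^{(j)}\bigr)_+, \]
so that $\phi^{(j)}(s) = \delta_j := \tau^{(j-1)} - \tau^{(j)}$ (hence $\sum_j \delta_j = \phi(s)$), $\vol(\mathrm{supp}\, \phi^{(j)}) \le V^{(j)}$, and $\sum_{j=1}^J \mathcal{E}(\phi^{(j)}, \phi^{(j)}) \le \mathcal{E}(\phi, \phi) = R$ (on any edge $uv$ the nonnegative slice differences sum to at most $|\phi(u) - \phi(v)|$, hence their squares sum to at most $(\phi(u) - \phi(v))^2$). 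Applying $(\star)$ to $\phi^{(j)}$ with the crude lower bound $\|\phi^{(j)}\|_{L^2(\deg)}^2 \ge V^{(j-1)} \delta_j^2$ (from vertices with $\phi \ge \tau^{(j-1)}$) and $V^{(j)} = 2 V^{(j-1)}$ yields $\delta_j^2 \lesssim \mathcal{E}(\phi^{(j)}, \phi^{(j)}) / \bigl(c^2 (V^{(j-1)})^{2\ee}\bigr)$.

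The finish is a weighted Cauchy--Schwarz with weights $(V^{(j-1)})^{\pm \ee}$:
\[ \phi(s)^2 \;=\; \Bigl( \sum_{j=1}^J \delta_j \Bigr)^2 \;\le\; \Bigl( \sum_j \delta_j^2 (V^{(j-1)})^{2\ee} \Bigr) \Bigl( \sum_j (V^{(j-1)})^{-2\ee} \Bigr) \;\lesssim\; \frac{R}{c^2} \cdot \frac{1}{\ee\, \deg(s)^{2\ee}}, \]
since the first factor is $\lesssim \sum_j \mathcal{E}(\phi^{(j)})/c^2 \le R/c^2$ by the previous display, and the second factor is the geometric series $\deg(s)^{-2\ee} \sum_{j \ge 1} 4^{-(j-1)\ee} \lesssim 1/(\ee\, \deg(s)^{2\ee})$. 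Taking square roots and arguing symmetrically for $-\phi(t)$ proves the theorem. The main obstacle is the correct $\ee$-dependence: since the natural Sobolev constant coming from $(\star)$ is itself of order $1/\ee$, a single global use would produce $1/\ee^2$; the dyadic decomposition combined with the weighted Cauchy--Schwarz converts one of these factors back into $1/\sqrt{\ee}$ via the geometric sum $\sum 4^{-(j-1)\ee} \sim 1/\ee$, recovering the claimed $1/\ee$.
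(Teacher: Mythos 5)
Your proof is correct and takes a genuinely different route from the paper's.

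The paper argues directly with the unit electric flow: flow conservation forces $\sum_{e \in \partial S_p} w(e)\,|\Delta_e \vecc p| = 1$ across every level set $S_p$ of the potential; Markov's inequality then isolates a large-weight set of boundary edges whose potential drop is at most $2/w(\partial S_p)$, so after a potential decrease of $4/(c^2 \vol(S_p)^{2\ee})$ the volume of the level set at least doubles. Iterating this doubling from $\vol(S_{p_0}) = \deg(s)$ up to $\vol(G)/2$ and summing the geometric series gives the $1/(\ee c^2 \deg(s)^{2\ee})$ bound, and the symmetric argument from $t$ completes the proof. You instead derive a local Poincar\'e inequality $\mathcal{E}(g,g) \gtrsim c^2 \|g\|_{L^2(\deg)}^2 / V_0^{1-2\ee}$ from mild expansion via the coarea formula and the $g^2$-trick (Cauchy--Schwarz), then dyadically slice the potential into Lipschitz truncations $\phi^{(j)}$ with $\sum_j \mathcal{E}(\phi^{(j)}) \le \mathcal{E}(\phi) = R$, and close with a weighted Cauchy--Schwarz whose second factor is the same geometric series. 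Both proofs are ultimately driven by the same volume-doubling/dyadic mechanism, but the paper's flow-plus-Markov argument is combinatorial and immediately algorithmic -- it produces a specific threshold cut $S_p$ violating mild expansion, which is exactly what Corollary~\ref{cor:smallcut} needs -- whereas your energy-form argument is the more analytic, Sobolev--Poincar\'e-style route and would transfer more easily to continuous or measure-theoretic settings. Note that your proof also only ever tests mild expansion on superlevel sets of $\phi$ (the sets $\{(\phi^{(j)})^2 > \tau\}$ are all level sets of $\phi$), so the algorithmic consequence is recoverable, though with a bit more bookkeeping than in the paper's version. One small technical point to nail down is the selection of the thresholds $\tau^{(j)}$ when single vertices carry large degree weight: you should define $\tau^{(j)} := \sup\{\tau : \vol(\{\phi \ge \tau\}) \ge V^{(j)}\}$ so that the support bound $\vol(\{\phi > \tau^{(j)}\}) \le V^{(j)}$ and the mass lower bound $\vol(\{\phi \ge \tau^{(j-1)}\}) \ge V^{(j-1)}$ hold simultaneously, and allow $\delta_j = 0$ (empty slices) rather than insisting on strictly decreasing thresholds.
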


In~\cite{ChandraEtAl}, Chandra et al. proved that a $d$-regular graph
with constant expansion has effective resistance diameter $\OO(1/d)$.
They also proved that the effective resistance diameter of a
$d$-dimensional grid is $\OO(1/d)$ when $d>2$ even though it is a poor
expander.  Theorem~\ref{thm:wtshnim} can be seen as a common
generalization of these two results, using the mild expansion
condition as a unifying assumption.  Chandra et al.~\cite{ChandraEtAl}
also showed that the effective resistance diameter of a
$2$-dimensional grid is $\Theta(\log n)$.  Note that for a
$\sqrt{n}\times \sqrt{n}$ grid, $\Phi(S)\approx 1/\vol(S)^{1/2}$ for
any $k\times k$ square.  This shows that the mild expansion assumption
of the theorem cannot be weakened in the sense that if $\ee=0$ for
some sets $S$, then $\Reff(s,t)$ may grow as a function of $|V|$.

The proof of Theorem~\ref{thm:wtshnim} also provides an efficient
algorithm to find such a sparse cut.  The high-level idea is to prove that if all
level sets of the $st$ electric potential vector satisfy the
\ref{eq:wbdbd} condition, then the potential difference between $s$
and $t$ must be small, i.e., $\Reff(s,t)$ is small.  Combining with a fast Laplacian solver \cite{SpielmanTengLapSolv}, we
show that the existence of a pair of vertices $u, v \in V$ with high
effective resistance distance implies the existence of a sparse cut which can be found in nearly linear time.

\begin{restatable}{corollary}{smallcut}
\label{cor:smallcut}
  Let $G = (V, E, \vecc w)$ be a weighted undirected graph. If $\deg(v)\geq
  1/\alpha$ for all $v\in V$, then for any $0<\ee<1/2$, there is a
  subset of vertices $U \subseteq V$ such that
\[\Phi(U)  \lesssim \frac{\alpha^\ee}{\sqrt{\rdim\cdot \ee}} \cdot \vol(U)^{\ee - 1/2}.\]
  Furthermore, the set $U$ can be found in time $\OOs\parens*{m \cdot
    \log\parens*{\frac{w(E)}{\min_e w(e)}}}$.
\end{restatable}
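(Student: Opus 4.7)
The corollary has two components: the existence of a cut $U$ with the stated conductance bound, and the nearly-linear-time algorithm to find it. The existence part is essentially the contrapositive of Theorem~\ref{thm:wtshnim}. Let $s^\star, t^\star \in V$ realize the effective resistance diameter, i.e.\ $\Reff(s^\star, t^\star) = \rdim$, and set
\[ c \;:=\; K \cdot \frac{\alpha^{\ee}}{\sqrt{\ee \cdot \rdim}} \]
for a sufficiently large absolute constant $K$. If every $S \subseteq V$ with $\vol(S) \le \vol(G)/2$ satisfied $\Phi(S) \ge c \cdot \vol(S)^{\ee - 1/2}$, then Theorem~\ref{thm:wtshnim} would give
\[ \rdim \;\lesssim\; \parens*{\frac{1}{\deg(s^\star)^{2\ee}} + \frac{1}{\deg(t^\star)^{2\ee}}} \cdot \frac{1}{\ee c^2} \;\le\; \frac{2\alpha^{2\ee}}{\ee c^2} \;=\; \frac{2\,\rdim}{K^2}, \]
using the hypothesis $\deg(v) \ge 1/\alpha$ for every $v$. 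For $K$ chosen large enough this is a contradiction, so some $U$ with $\vol(U) \le \vol(G)/2$ must satisfy $\Phi(U) < c \cdot \vol(U)^{\ee - 1/2}$, which is the desired bound.

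For the algorithmic part, the plan is to make the contrapositive constructive by following the structure of the proof of Theorem~\ref{thm:wtshnim}. Assuming that proof proceeds, as the excerpt suggests, by examining the level sets of the $st$-electric potential $\phi \in \RR^V$ for a pair with large effective resistance, the algorithm naturally breaks into three stages. First, locate a pair $(s, t)$ with $\Reff(s, t) = \Omega(\rdim)$; this is doable in $\OOs(m)$ time by computing the Spielman--Srivastava Johnson--Lindenstrauss effective resistance sketch and selecting the maximum pair in the resulting low-dimensional embedding. Second, compute the $st$-electric potential vector $\phi$ to sufficient numerical accuracy using the Spielman--Teng Laplacian solver; the factor of $\log\parens*{w(E)/\min_e w(e)}$ in the runtime arises precisely because the precision required to resolve potential differences (and therefore to correctly order vertices by potential) depends polynomially on the condition number of the Laplacian, which is controlled by the weight ratio. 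Third, sort the vertices by $\phi$-value and sweep the resulting $|V|$ threshold cuts, returning the one that minimizes $\Phi(S)\cdot \vol(S)^{1/2 - \ee}$.

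The principal obstacle is justifying that the sweep actually finds a cut meeting the claimed bound. This requires the constructive content of Theorem~\ref{thm:wtshnim}: some kind of telescoping or integration-by-parts identity that decomposes the potential drop $\phi(s) - \phi(t) = \Reff(s, t)$ across the level sets, so that if \emph{every} level set of $\phi$ satisfied mild expansion with parameter $c$, the total drop could be at most the bound guaranteed by the theorem. For our chosen $(s, t)$ with $\Reff(s, t) = \Omega(\rdim)$, this forces at least one level set to violate mild expansion with the corresponding value of $c$, and that level set is exactly what the sweep recovers. The remaining steps -- absorbing the approximation error of the Laplacian solver into the constant hidden by $\lesssim$, and bookkeeping the polylogarithmic factors hidden inside $\OOs$ -- are standard once the structural statement is in place.
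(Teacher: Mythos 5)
Your proposal is correct and follows essentially the same route as the paper: take the contrapositive of Theorem~\ref{thm:wtshnim} with $c \asymp \alpha^{\ee}/\sqrt{\ee\cdot\rdim}$, locate a pair of vertices with near-maximal effective resistance, compute the (approximate) $st$-electric potential with a nearly-linear-time Laplacian solver, and sweep the threshold cuts for the least expanding one. The only step you leave slightly implicit is how to find the far-apart pair in $\OOs(m)$ time --- the paper's Lemma~\ref{lem:furthpoints} fixes an arbitrary vertex $u$, computes single-source resistance estimates via Lemma~\ref{lem:singsourceeffres}, and invokes the triangle inequality to lose only a constant factor, which is exactly the trick you would also need to turn your Spielman--Srivastava sketch into a nearly-linear-time approximate-diameter search rather than an $\Theta(n^2)$ pair scan.
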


Using Corollary~\ref{cor:smallcut} repeatedly, we can prove the following graph decomposition result.

\begin{restatable}[Main]{theorem} {lapsolv}
\label{thm:lapsolv}
  Given a weighted undirected graph $G = (V, E, \vecc w)$, and a large enough
  parameter $\delta > 1$, there is an algorithm with time complexity $\OOs\parens*{m \cdot n \cdot \log\parens*{\frac{w(E)}{\min_e w(e)}}}$ that finds a partition $V = \bigcup_{i = 1}^h V_i$ satisfying
  \begin{equation}
    w\parens*{ E - \bigcup_{i = 1}^h E(V_i) } \lesssim   \frac{w(E)}{\delta} \tag{loss bound} \label{eq:effresbd2}
  \end{equation}
  and
  \begin{equation}
    \rdim(G[V_i]) \lesssim \delta^{3} \cdot \frac{n}{w(E)} \tag{resistance bound} \label{eq:resbd}
  \end{equation}
  for all $i=1, \ldots,h$.
\end{restatable}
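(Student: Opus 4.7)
The plan is a recursive decomposition driven by Corollary \ref{cor:smallcut}: repeatedly cut any component whose effective resistance diameter exceeds the target $R^\star := \delta^3 \cdot n/w(E)$, charging the removed weight to the smaller side of each cut.

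First I would handle low-degree vertices as a preprocessing step: declare every $v$ with $\deg(v) < w(E)/(n\delta)$ to be its own singleton cluster. The total weight incident to such vertices is at most $n \cdot w(E)/(n\delta) = w(E)/\delta$, which already fits within the loss budget. After this step every surviving vertex satisfies $\deg(v) \ge 1/\alpha$ with $\alpha := n\delta/w(E)$, matching the minimum-degree hypothesis of Corollary \ref{cor:smallcut}.

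Then, for each surviving component $H$, I would (i) approximately test whether $\rdim(H) \le R^\star$ via a fast Laplacian solver combined with Johnson--Lindenstrauss sketching of effective resistances, (ii) if not, invoke Corollary \ref{cor:smallcut} with a small constant $\ee$ (for concreteness, $\ee = 1/4$) to obtain in $\OOs(m)$ time a set $U \subseteq V(H)$ -- which we may assume is the smaller side -- satisfying
\[ \Phi_H(U) \;\lesssim\; \frac{\alpha^\ee}{\sqrt{R^\star \cdot \ee}}\cdot \vol(U)^{\ee - 1/2}, \]
and (iii) recurse on both sides. Because each cut strictly halves its smaller side, the recursion produces at most $n-1$ cuts, which together with the $\OOs(m)$ per-cut cost gives the claimed $\OOs(mn)$ running time.

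The main obstacle will be the loss bound. Cut $k$ removes weight at most $\Phi_{H_k}(U_k)\,\vol(U_k) \lesssim (\alpha^\ee/\sqrt{R^\star \ee})\,\vol(U_k)^{1/2+\ee}$. Standard geometric charging (each vertex lies in $O(\log n)$ nested smaller sides, whose volumes halve step by step) gives $\sum_k \vol(U_k) \lesssim w(E)\log n$, and H\"older's inequality with at most $n$ cuts then yields $\sum_k \vol(U_k)^{1/2+\ee} \lesssim n^{1/2-\ee}(w(E)\log n)^{1/2+\ee}$. Substituting $\alpha = n\delta/w(E)$ and $R^\star = \delta^3 n/w(E)$, the total removed weight simplifies to roughly $(\log n)^{1/2+\ee}\, w(E)/(\delta^{3/2-\ee}\sqrt{\ee})$, which is at most $w(E)/\delta$ once $\delta$ exceeds a $\poly(\log n)$ threshold -- this is precisely what ``sufficiently large $\delta$'' absorbs. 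The delicate tradeoff is that $\alpha^\ee$ pushes $\ee$ toward $0$ while the $1/\sqrt{\ee}$ and the H\"older factor $n^{1/2-\ee}$ push $\ee$ away from $0$, and the low-degree preprocessing is essential to keep $\alpha$ polynomial in $n$ so that a fixed constant $\ee$ balances both sides.
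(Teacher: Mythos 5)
Your high-level plan is the same as the paper's: recursively split components whose effective-resistance diameter exceeds $R^\star$ using Corollary~\ref{cor:smallcut}, after stripping low-degree vertices so the minimum-degree hypothesis holds. But two details do not quite work as written.

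First, you only strip low-degree vertices once, at the top level. After a cut, a vertex can have $\deg_{H[U]}(v) < w(E)/(n\delta)$ even though $\deg_G(v) \ge w(E)/(n\delta)$, so the hypothesis of Corollary~\ref{cor:smallcut} fails inside the recursion and the constant $c \asymp \sqrt{(\deg(u)^{-2\ee}+\deg(v)^{-2\ee})/(\Reff(u,v)\ee)}$ can blow up. The paper's algorithm therefore re-runs the low-degree removal at every recursive call; the total weight lost this way is still $\le W/2$ because each removal isolates a vertex of $G$, so a vertex can trigger it at most once. This is a straightforward fix but needs to be stated.

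Second, and more substantively, your global H\"older argument loses a $\plog n$ factor, whereas the theorem is meant to hold for an absolute constant $\delta$ (the abstract's ``remove one percent of weight'' corresponds to $\delta=100$). Your chain is $\sum_k \vol(U_k)\lesssim w(E)\log n$, then H\"older over all $\le n$ cuts gives $\sum_k \vol(U_k)^{1/2+\ee}\lesssim n^{1/2-\ee}\bigl(w(E)\log n\bigr)^{1/2+\ee}$, and plugging in $\alpha=n\delta/w(E)$, $R^\star=\delta^3 n/w(E)$ leaves a total loss of order $w(E)(\log n)^{1/2+\ee}/(\delta^{3/2-\ee}\sqrt\ee)$, which is only $\lesssim w(E)/\delta$ when $\delta \gtrsim (\log n)^{(1/2+\ee)/(1/2-\ee)}$. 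That is a genuinely weaker statement. The paper avoids the $\log n$ with a per-edge token scheme: when cutting $(U,U^c)$, every edge $e\in E_H(U)$ is charged $w(\partial_H U)/w(E_H(U)) \lesssim \Phi_H(U)/(1-\Phi_H(U))$, and since any two nested smaller sides containing $e$ have volumes differing by a factor of $2$, these charges form a geometric series whose sum is $\lesssim c\,(W/2n)^{\ee-1/2}\lesssim 1/\delta$ per edge; summing $\Psi(e)w(e)$ then gives loss $\lesssim w(E)/\delta$ for any constant $\delta$ large enough that the denominators stay positive. If you prefer to stay with your direct summation, a depth-by-depth H\"older also works: cuts at recursion depth $d$ number at most $2^d$, each smaller side has volume $\le w(E)/2^d$, and their volumes sum to $\le w(E)$, so $\sum_{\text{depth }d}\vol(U_k)^{1/2+\ee}\le w(E)^{1/2+\ee}2^{d(1/2-\ee)}$; summing the geometric series over $d$ up to $\log_2(4n\delta)$ gives $\sum_k \vol(U_k)^{1/2+\ee}\lesssim w(E)^{1/2+\ee}(n\delta)^{1/2-\ee}$ with no $\log n$, and the total loss becomes $\lesssim w(E)/(\delta\sqrt\ee)$, matching the paper for constant $\delta$ and $\ee$. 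Your identification of the $\alpha^\ee$ versus $1/\sqrt\ee$ trade-off and the role of the degree threshold in making $\alpha$ polynomial in $n$ is exactly right; it is only the global H\"older step that gives away the logarithm.
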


Let $G$ be a $d$-regular unweighted graph.  Theorem~\ref{thm:lapsolv}
implies that it is possible to remove a constant fraction of the edges
of $G$ and decompose $G$ into components with effective resistance
diameter at most $\OO(1/d)$. Note that $d$-regular
$\Omega(1)$-expanders with $\rdim=\OO(1/d)$ have the least effective
resistance diameter among all $d$-regular graphs. So, even though it
is impossible to decompose $d$-regular graphs into graphs with
$\Omega(1)$-expansion while removing only a constant fraction of
edges, we can find a decomposition with analogous ``electrical
properties''.

We can also view Theorem~\ref{thm:lapsolv} as a
generalization of the following result: Any $d$-regular graph can be
decomposed into $\Omega(d)$-edge connected subgraphs by removing only
a constant fraction of edges.  This is because if the effective
resistance diameter of an unweighted graph $G$ is $\eps$, then $G$
must be $1/\eps$-edge connected. Recall that a graph is $k$-edge
connected, if the size of every cut in that graph is at least $k$.

\section{Preliminaries}\label{sec:prelim}

In this section, we will first define the notations used in this paper,
and then we will review the background in effective resistances, Laplacian solvers, and graph expansions in the following subsections.

Given an undirected graph $G = (V, E)$ and a subset of vertices $U \subseteq V$, we use the notation $E_G(U)$ for the set of edges with both endpoints in $U$,
i.e.~$E_G(U) = \set{\set{u, v} \in E(G) : u, v \in U}$.
We write $U^c$ for the complement of $U$ with respect to $V(G)$, i.e.~$U^c = V\backslash U$.
The variables $n$ and $m$ stand for the number of vertices and the edges of the graph respectively, i.e.~$n = |V|$ and $m = |E|$.
We use the notation $\partial_G U$ for the edge boundary of $U \subseteq V$,
i.e.~$\partial_G U = E_G(U, U^c) = \set{\set{u, v} \in E : u \in U, v \in
  U^c}$.
For a graph $G = (V, E)$ with weights $\vecc w \in \RRn^E$, we
write $\deg_G(v) = \sum_{u : uv \in E} w(u, v)$ for the weighted degree
of $v$. For $S \subseteq V$, the volume $\vol_G(S)$ of $S$ is defined to
be $\vol_G(S) = \sum_{s \in S} \ddeg(s)$. When the graph is clear in the context we may drop the subscript in all aforementioned notation.

Scalar functions and vectors are typed in bold,
i.e.~$\vecc x \in \RR^V$, or $\vecc w \in \RR^E$. For a subset
$A \subseteq E$, the notation $w(A)$ stands for the sum of the weights
of all edges in $A$, i.e.~$w(A) = \sum_{e \in A} w(e)$. The $j$-th
canonical basis vector is denoted by $\vecc e_j \in \RR^V$.  Matrices
are typed in serif, i.e.~$\matr A \in \RR^{V \times V}$.

Time complexities are given in asymptotic notation. We employ the
notation $\OOs(f(n))$ to hide polylogarithmic factors in $n$,
i.e.~$\OOs(f(n)) = \OO(f(n) \cdot \plog(n))$.  We use the notation $f
\lesssim g$ for asymptotic inequalities, i.e.~$f = \OO(g)$; and the
notation $f \asymp g$ for asymptotic equalities, i.e.~$f = \Theta(g)$.

\subsection{Electric Flow, Electric Potential, and Effective Resistance}

Let $G = (V, E)$ be a given graph with non-negative edge weights
$\vecc w \in \RRn^E$. The notion of an electric flow arises when one
interprets the graph $G$ as an electrical network where every edge $e
\in E$ represents a resistor with resistance $1/w(e)$.

We fix an arbitrary orientation $E^\pm$ of the edges $E$ and define a
unit $st$ flow in this network as a function $\vecc f \in
\RRn^{E^\pm}$ (where for $e \not\in E^{\pm}$ we define $f(e) =
-f(-e)$) satisfying the following:
\[ \sum_{w \in \delta^+(v)} f(vw) - \sum_{u \in \delta^-(v)} f(uv) =
  \begin{cases}
    1 & \tif v = s\\
    -1 & \tif v = t\\
    0 & \text{ otherwise, } \tag{flow conservation}\label{eq:fc}
  \end{cases}\]
where $\delta^+(v)$ is the set of edges having $v$ as the
head in our orientation, and $\delta^-(v)$ is the set of
edges having $v$ as tail.  Let $e = uv \in E^\pm$ be an oriented edge.
The flow $\vecc f$ has to obey Ohm's law
\begin{equation}
  f(e) = w(e) \cdot \Diff_e\vecc p = w(e) \cdot (p(u) - p(v)) \tag{Ohm's law}\label{eq:ohm}
\end{equation}
for some vector $\vecc p \in \RR^V$ which we call the potential
vector.
The electrical flow between the vertices $s$ and $t$ is the unit $st$ flow that satisfies \ref{eq:fc} and \ref{eq:ohm}.

The electrical energy $\Ee(\vecc f)$ of a flow $\vecc f$ is defined as
the following quantity,
\begin{equation}
\Ee(\vecc f) = \sum_{e \in E^\pm} \frac{f(e)^2}{w(e)}.  \tag{electrical energy}
\end{equation}
It is known that the electric flow between $s$ and $t$ is the unit
$st$ flow with minimal electrical energy.  The effective resistance
$\Reff(s, t)$ between the vertices $s$ and $t$ is the potential
difference between the vertices $s$ and $t$ induced by this flow,
i.e. $\Reff(s, t) = p(s) - p(t)$. It is known that the potential
difference between $s$ and $t$ equals the energy $\Ee(\vecc f_{st})$
of this flow. This is often referred as Thomson's principle.

The electric potential vector and the effective resistance are known
to have the following closed form expressions: Let $\matr W \in \RR^{V
  \times V}$ be the weighted adjacency matrix of $G$, i.e.~the matrix
satisfying $W(u, v) = 1[uv \in E] \cdot w(u, v)$, and $\matr D \in
\RR^{V \times V}$ the weighted degree matrix, i.e.~the diagonal matrix
satisfying $D(v, v) = \ddeg(v) = \sum_{u : uv \in E} w(u, v)$.  The
(weighted) Laplacian $\matr L_G \in \RR^{V \times V}$ is defined to be
the matrix
\begin{equation}
  \matr L_G = \matr D - \matr W. \tag{weighted Laplacian}
\end{equation}
It is well-known that this is a symmetric positive semi-definite
matrix. We will take
\[ \matr L_G = \sum_{i = 2}^n \lambda_i \vecc v_i \vecc v_i^\top \]
as the spectral decomposition of $\matr L_G$, where
$\lambda_1 = 0 \le \lambda_2 \le \cdots \le \lambda_n$ are the
eigenvalues of $\matr L_G$ sorted in increasing order. It is easy to
verify $\matr L_G \vecc 1 = 0$ and further it can be shown that this
is the only vector (up to scaling) satisfying this when $G$ is
connected. This means if $G$ is connected, the matrix $\matr L_G$ is
invertible in the subspace perpendicular to $\vecc 1$. This inversion
will be done by the matrix $\matr L_G^\dagger$, the so-called
Moore-Penrose pseudo-inverse of $\matr L_G$ defined by
\begin{equation}
  \matr L_G^\dagger = \sum_{j = 2}^n \frac{1}{\lambda_j} \vecc v_i
  \vecc v_i^\top. \tag{pseudo-inverse of $\matr L_G$}
\end{equation}

Let $\vecc f^\star \in \RR^E$ be the $st$ unit electric flow vector. It can
be verified that the $st$ electric potential $\vecc p^\star$ --
i.e.~the vector satisfying $w(uv) \cdot (p^\star(u) - p^\star(v)) =
f^\star(uv)$ for all $uv \in E^\pm$ -- satisfies the equation
\begin{equation}
\matr L_G \vecc p^\star = \vecc e_s - \vecc e_t  \Longleftrightarrow  \vecc p^\star = \matr L_G^\dagger (\vecc e_s - \vecc e_t). \label{eq:lapp}
\end{equation}

In particular, this implies the following closed form expression for
$\Reff(s, t)$
\begin{equation}
  \Reff(s, t) = \langle \vecc e_s - \vecc e_t, \matr L_G^\dagger (\vecc e_s - \vecc e_t) \rangle. \tag{$st$ effective resistance} \label{eq:effresdef}
\end{equation}
It can be verified that this defines a ($\ell_2^2$) metric on the set
vertices $V$ of $G$ \cite{KleinRandic}, as we have
\begin{enumerate}
\item $\Reff(u, v) = 0$ if and only if $u = v$.
\item $\Reff(u, v) = \Reff(v, u)$ for all $u, v \in V$.
\item $\Reff(u, v) + \Reff(v, w) \ge \Reff(u, w)$ for all $u, v, w \in V$.
\end{enumerate}
Further, by routing the unit $st$ flow along the $st$ shortest path we
see that the shortest path metric dominates the effective resistance
metric, i.e.  $\Reff(u, v) \le \text{dist}(u, v)$ for all the pairs of vertices
$u, v \in V$.

It is known that the commute time distance $\kappa(u, v)$ between $u$
and $v$ -- the expected number of steps a random walk starting from
the vertex $u$ needs to visit the vertex $v$ and then return to $u$ -- is
$\vol(G)$ times the effective resistance distance $\Reff(u, v)$
\cite{ChandraEtAl}.  Also, the effective resistance $\Reff(u,v)$ of an
edge $uv \in E$ corresponds to the probability of this edge being
contained in a uniformly sampled random spanning tree \cite{Kirchoff}.
A well-known result of Matthews \cite{Matthews} relates the effective
resistance diameter to the cover time of the graph -- the expected
number of steps a random walk needs to visit all the vertices of
$G$. Aldous \cite{AldousRST} and Broder \cite{BroderRST} have shown
that simulating a random walk until every vertex has been visited
allows one to sample a uniformly random spanning tree of the graph.

\subsection{Solving Laplacian Systems}\label{sec:lapsolv}

For our algorithmic results, it will be important to be able to
compute electric potentials, and effective resistances quickly. We
will do this by appealing to Equation \eqref{eq:lapp} and the
definition of the \ref{eq:effresdef}. Both of these equations require
us to solve a Laplacian system. Fortunately, it is known that these
systems can be solved in nearly linear time \cite{SpielmanTengLapSolv,
  KoutisMillerPeng10, KoutisMillerPeng11, KelnerEtAl, cohen2014solving, KyngSachdeva}.

\begin{lemma}[The Spielman-Teng Solver, \cite{SpielmanTengLapSolv}]\label{lem:stsolver}
  Let a (weighted) Laplacian matrix $\matr L \in \RR^{V \times V}$, a
  right-hand side vector $\vecc b \in \RR^V$, and an accuracy parameter
  $\zeta > 0$ be given.
  Then, there is a randomized algorithm which takes time
  $\OOs(m \cdot \log(1/\zeta))$ and produces a vector $\widehat{\vecc x}$
  that satisfies
  \begin{equation}
    \norm*{\widehat{\vecc x} - \matr L^\dagger \vecc b}_{\matr L} \le \zeta \cdot \norm{\matr L^\dagger \vecc b}_{\matr L} \tag{accuracy guarantee}\label{eq:ag}
  \end{equation}
  with constant probability, where $\norm{\vecc x}^2_{\matr A} = \langle \vecc x, \matr A \vecc x \rangle$.
\end{lemma}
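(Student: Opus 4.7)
The plan is to use a preconditioned iterative scheme, reducing the Laplacian solve to repeatedly solving a ``simpler'' Laplacian $\matr M$ that spectrally approximates $\matr L$. The basic tool is preconditioned Chebyshev iteration (or equivalently preconditioned conjugate gradient): if $\matr M$ is a symmetric PSD matrix with the same null space as $\matr L$ and $\kappa := \lambda_{\max}(\matr M^\dagger \matr L) / \lambda_{\min}(\matr M^\dagger \matr L) \le K$, then $O(\sqrt{K} \log(1/\zeta))$ iterations suffice to drive the $\matr L$-norm error down by a factor $\zeta$, where each iteration costs one multiplication by $\matr L$ (cheap, since $\matr L$ has $m$ nonzeros) and one solve with $\matr M$. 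So the whole problem is to build an $\matr M$ that is (a) a good spectral preconditioner for $\matr L$ and (b) itself quickly invertible.

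First I would construct a \emph{low-stretch spanning tree} $T$ of $G$ via the Abraham--Bartal--Neiman / Alon--Karp--Peleg--West construction, with total stretch $\sum_{e \in E} \mathrm{str}_T(e) \lesssim m \cdot \mathrm{polylog}(n)$. The tree Laplacian $\matr L_T$ alone is already a preconditioner with $\kappa(\matr L_T^\dagger \matr L) \le \sum_e \mathrm{str}_T(e)$, which would give an $\widetilde O(m^{3/2})$ solver -- not yet good enough. To reduce $\kappa$, I would add back a carefully sampled subset of off-tree edges, with sampling probabilities proportional to their stretch (this is the ultra-sparsifier idea). Choosing a parameter $k$, one can build an \emph{ultra-sparsifier} $\matr M$ with $n - 1 + \widetilde O(m/k)$ edges and $\kappa(\matr M^\dagger \matr L) \le k \cdot \mathrm{polylog}(n)$.

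Next I would recurse: the preconditioner $\matr M$ is itself a Laplacian, so each ``inner solve'' during the Chebyshev iteration is handled by the same algorithm applied to $\matr M$. The key accounting is that after eliminating degree-$1$ and degree-$2$ vertices (a standard reduction preserving sparse-to-general equivalence for tree-plus-few-edges graphs), the effective size of $\matr M$ drops by a factor $\widetilde\Theta(k)$ per level, while each level contributes $\widetilde O(\sqrt{k})$ outer iterations. Choosing $k$ a small polylog in $n$ balances these so that the product of per-level costs telescopes to $\widetilde O(m \log(1/\zeta))$, and the recursion depth is $O(\log n)$. Error analysis needs the inner solves to be a bit more accurate than the outer tolerance; this only inflates the iteration count by constants because Chebyshev is stable under small perturbations of the preconditioner solve, so setting inner tolerance to, say, a $1/\mathrm{poly}(n)$ fraction of the outer tolerance suffices.

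The hardest part will be the ultra-sparsification step and the associated error analysis: showing that a carefully subsampled set of off-tree edges simultaneously yields a Laplacian spectrally close to $\matr L$ and combinatorially simple enough that its recursive solution is cheap. Getting these two requirements to trade off just right -- so that $\sqrt{\kappa} \cdot T_{\mathrm{solve}}(\matr M) \le \tfrac{1}{2} T_{\mathrm{solve}}(\matr L)$ per recursion level -- is exactly the technical core of \cite{SpielmanTengLapSolv}, and is what powers the final nearly-linear runtime $\widetilde O(m \log(1/\zeta))$ stated in the lemma.
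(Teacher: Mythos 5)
The paper does not prove this lemma; it is imported verbatim from \cite{SpielmanTengLapSolv} (and its successors, which the paper also cites) and used as a black box, so there is no ``paper's proof'' to compare against. Your sketch is a correct high-level account of the standard architecture behind nearly-linear-time Laplacian solvers: low-stretch spanning tree as a base preconditioner, ultra-sparsification by sampling off-tree edges proportionally to stretch, preconditioned Chebyshev/CG with $\widetilde{O}(\sqrt{K}\log(1/\zeta))$ outer iterations, partial elimination of degree-$1$ and degree-$2$ vertices to shrink the preconditioner, and a recursion whose per-level cost telescopes to $\widetilde O(m\log(1/\zeta))$. Two small caveats: (1) the original Spielman--Teng paper used a weaker low-stretch tree (closer to Alon--Karp--Peleg--West) and a more involved spectral-sparsification step; the cleaner stretch-proportional sampling you describe is closer to the later Koutis--Miller--Peng refinement, which the paper also cites and which gives the same $\widetilde{O}(m\log(1/\zeta))$ bound. (2) You should be a bit more careful about the error propagation through the recursion: Chebyshev with an approximate preconditioner solve is not unconditionally stable, and one typically needs either a fixed polynomial-in-$n$ inner accuracy at every level combined with a bounded recursion depth, or a Richardson-style outer wrapper, to control the accumulation. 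Since the lemma is cited and not reproved, none of this affects the correctness of the paper.
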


For our purposes it will suffice to pick $\zeta$ inversely
polynomial in the size of the graph in the unweighted case, and
$1/ \poly(w(E)/\min_e w(e), 1/m)$ in the weighted case.

Extending the ideas of Kyng and Sachdeva \cite{KyngSachdeva}, Durfee et
al.~\cite{DurfeeEtAl} show that it is possible to compute approximations for effective resistances between a set of given pairs $S \subseteq V \times V$
efficiently.
\begin{lemma}\label{lem:singsourceeffres}
  Let $G = (V, E, \vecc w)$ be a weighted graph, $\beta > 0$ an
  accuracy parameter, and $S \subseteq V \times V$. There is an
  $\OOs(m + (n + |S|)/\beta^2)$-time algorithm which returns numbers
  $A(u, v)$ for all $(u,v) \in V$ satisfying
  \[ e^{-\beta} \Reff(u, v) \le A(u, v) \le e^{\beta} \Reff(u, v). \]
\end{lemma}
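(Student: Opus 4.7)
The plan is to combine Johnson--Lindenstrauss dimension reduction in the style of Spielman--Srivastava with spectral sparsification in order to avoid paying a multiplicative $1/\beta^2$ on the $m$ term. Using the factorization $\matr L_G = \matr B^\top \matr W \matr B$ with $\matr B$ the signed edge-vertex incidence matrix and $\matr W$ the diagonal weight matrix, we can rewrite
\[
\Reff(u,v) \;=\; \norm*{\matr W^{1/2}\matr B\,\matr L_G^\dagger(\vecc e_u - \vecc e_v)}_2^2,
\]
so that effective resistance becomes the squared Euclidean distance between the points $\vecc x_v := \matr W^{1/2}\matr B\,\matr L_G^\dagger \vecc e_v \in \RR^{m}$ for $v \in V$.

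First I would compute, in $\OOs(m)$ time, a spectral sparsifier $H$ of $G$ with only $\widetilde O(n)$ edges; by standard sparsifier guarantees this preserves every effective resistance up to a $(1\pm\beta/3)$ multiplicative factor, so it suffices to approximate effective resistances in $H$. Next, draw a random JL matrix $\matr Q \in \RR^{k\times m_H}$ with $k = O(\log(n+\abs{S})/\beta^2)$ rows of rescaled $\pm 1$ entries, and form
\[
\matr Y \;:=\; \matr Q\,\matr W_H^{1/2}\matr B_H\,\matr L_H^\dagger
\]
by first multiplying $\matr Q$ against the sparse matrix $\matr W_H^{1/2}\matr B_H$ (cost $\OOs(k\cdot m_H)=\OOs(n/\beta^2)$) and then solving $k$ Laplacian systems against $\matr L_H$ using Lemma~\ref{lem:stsolver}, each at cost $\OOs(n)$ since $H$ is sparse. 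For each queried pair $(u,v)\in S$ I return $A(u,v) := \norm{\matr Y(\vecc e_u - \vecc e_v)}_2^2$, which takes $O(k)$ time per query. Adding up, the preprocessing costs $\OOs(m + n/\beta^2)$ and the $\abs{S}$ queries cost $\OOs(\abs{S}/\beta^2)$, matching the claimed bound.

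The main obstacle is cleanly composing three sources of error without losing on the asymptotic cost: (i) the sparsifier distortion, (ii) the JL distortion on all (at most $\binom{n}{2}$) candidate pairs, and (iii) the relative-error $\norm{\cdot}_{\matr L}$ guarantee of Lemma~\ref{lem:stsolver}, which must be pulled back to an $\ell_2$-error on the images $\matr Y\vecc e_v$. The JL part is handled by a union bound over the $O(\abs{S})$ pairs, which is the reason for taking $k = O(\log(n+\abs{S})/\beta^2)$. Choosing the Laplacian-solver accuracy $\zeta$ to be inverse polynomial in $n$ and $1/\beta$ only adds polylogarithmic overhead absorbed by $\OOs(\cdot)$. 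Finally, translating the $(1\pm\beta)$-multiplicative bound obtained from this chain into the $e^{\pm\beta}$ form in the statement is a trivial rescaling of $\beta$ by a constant.
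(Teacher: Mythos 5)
The paper does not prove this lemma; it cites it directly from Durfee et al.~\cite{DurfeeEtAl}, describing their technique as an extension of the approximate Cholesky factorization machinery of Kyng and Sachdeva~\cite{KyngSachdeva}. Your proposal attempts an independent proof via a different route (sparsify, then Spielman--Srivastava--style JL sketching), so it is not the paper's approach. Unfortunately, the attempt has a gap in the running-time analysis, and that gap is exactly the obstruction the Cholesky-based approach is designed to get around.

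The problem is in the very first step: you assert an $\OOs(m)$-time spectral sparsifier $H$ with ``only $\widetilde O(n)$ edges'' that simultaneously ``preserves every effective resistance up to a $(1\pm\beta/3)$ multiplicative factor.'' These two properties cannot coexist for small $\beta$: a $(1\pm\beta)$-spectral sparsifier needs $\Omega(n/\beta^2)$ edges (Batson--Spielman--Srivastava gives a matching $O(n/\beta^2)$ upper bound, and there are lower bounds showing this is tight), while a sparsifier with $\OOs(n)$ edges, with the hidden factor independent of $\beta$, is only a constant-factor approximation. Once you account for $m_H = \OOs(n/\beta^2)$, each Laplacian solve against $\matr L_H$ costs $\OOs(n/\beta^2)$, not the $\OOs(n)$ you claim, and with $k = \Theta(\log(n+\Abs{S})/\beta^2)$ sketch rows the solve stage alone costs $\OOs(n/\beta^4)$ --- a factor $1/\beta^2$ worse than the stated $\OOs(m + (n+\Abs{S})/\beta^2)$. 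If instead you keep $m_H = \OOs(n)$ to make the solves cheap, $H$ only preserves resistances to a constant factor and the output is not a $(1\pm\beta)$-approximation. Escaping this trade-off requires an additional idea: roughly, one computes a sparse approximate Cholesky factorization $\matr L \approx \matr U^\top \matr U$ with nearly-linearly many nonzeros in $\OOs(m)$ time, writes $\Reff(u,v) \approx \norm{\matr U^{-\top}(\vecc e_u - \vecc e_v)}_2^2$, applies the JL sketch to the rows of $\matr U^{-\top}$ via cheap triangular solves whose cost is $\OOs(n)$ independent of $\beta$, and iteratively refines to push the constant-factor Cholesky error down to $(1\pm\beta)$; this is the content of the Kyng--Sachdeva/Durfee et al.\ machinery being cited. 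For the present paper the distinction is harmless since it only invokes the lemma with $\beta = \ln(3/2) = \Theta(1)$, but the lemma as stated is not established by your argument.
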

This lemma will aid us in computing fast approximations for furthest
points in the effective resistance metric. For our purposes, we only
need to pick $\beta$ as a small enough constant, i.e.~$\beta=\ln(3/2)$.
Similar guarantees can also be obtained using the ideas of Spielman
and Srivastava \cite{SpielmanSrivastava}.

\subsection{Conductance}

For a graph $G = (V, E)$ with non-negative edge weights $\vecc w \in \RRn^E$, we define the conductance of a set $S \subseteq V$ as
\begin{equation}
  \Phi(S) = \frac{w(\partial S)}{\vol(S)} \tag{conductance of a set}.
\end{equation}
The conductance of the graph $G$ is then defined as
\begin{equation}
  \Phi(G) = \min\set*{\Phi(S) : S \subseteq V \tand 2 \vol(S) \le \vol(G)}.
  \tag{conductance of a graph}
\end{equation}
It is well-known \cite{Cheeger, AlonMilman} that the conductance of
the graph $G$ is controlled by the spectral gap (second smallest
eigenvalue) $\tilde \lambda_2$ of the normalised Laplacian matrix
$\matr D^{-1/2} \matr L_G \matr D^{-1/2}$, i.e.
\begin{equation}
 \tilde \lambda_2 \lesssim \Phi(G) \lesssim \sqrt{\tilde
    \lambda_2}. \tag{Cheeger's inequality}\label{eq:cheeg}
\end{equation}
Appealing to the closed form formula for the \ref{eq:effresdef} it can
be verified that the spectral gap $\lambda_2$ of the (unnormalised) Laplacian
controls the effective resistance distance, i.e.
\[ \max_{s, t \in V} \Reff(s, t) \lesssim \frac{1}{\lambda_2}.\]
By an easy application of \ref{eq:cheeg} we see that the expansion
controls the effective resistance as well, i.e.
\[  \max_{s, t \in V} \Reff(s, t) \lesssim \frac{1}{\Phi(G)^2}.\]
Indeed,  Theorem \ref{thm:wtshnim} and Corollary \ref{cor:smallcut} will improve upon this bound.

\section{From Well Separated Points to Sparse Cuts} \label{sec:shnimproof}

In this section, we are going to prove Theorem~\ref{thm:wtshnim} and Corollary~\ref{cor:smallcut}.
As previously mentioned, we will prove that if all the
level sets of the potential vector have \ref{eq:wbdbd},
the effective resistance cannot be high.

\wtshnim*
\begin{proof}
In the following let $\vecc f \in \RR^E$ be a unit electric flow from
$s$ to $t$, and $\vecc p \in \RR^V$ be the corresponding vector of
potentials where we assume without loss of generality that $p(t) = 0$.
We direct our attention to the following threshold sets
\[  S_p = \set{v \in V: \vecc p(v) \ge p}.\]
Then, we have
\[  \sum_{e \in \partial S_p} | f(e)| = 1.\]
Using \ref{eq:ohm}, we can rewrite this into
\begin{equation}
  \sum_{e \in \partial S_p} w(e) \cdot \Abs{\Diff_e \vecc p} = 1,\label{eq:conserv1}
\end{equation}

where $\Diff_e \vecc p$ is the potential difference along the
endpoints of the edge $e$.
Normalizing this, we get
\begin{equation}
  \sum_{e \in \partial S_p} \frac{w(e)}{w(\partial S_p)} \cdot \Abs{\Diff_e \vecc p} = \frac{1}{w(\partial S_p)}. \label{eq:dropsum}
\end{equation}
Now, set $\mu(e) = w(e) / w(\partial S_p)$. Restricted over
the set of edges $\partial S_p$, $\mu$ is a probability
distribution and the LHS of \eqref{eq:dropsum} corresponds to
the expected potential drop when edges $e \in \partial S_p$
are sampled with respect to the probability distribution $\mu$, i.e.~we
have
\[  \Exp_\mu \Abs{\Diff_e \vecc p} = \frac{1}{w(\partial S_p)}.\]
Then, by Markov's inequality, we get a set $F \subseteq \partial S_p$ such that
\begin{itemize}
\item all edges $f \in F$ satisfy
  \[ \Abs{\Diff_f \vecc p} \le \frac{2}{w(\partial S_p)}; \]
\item $\Pp_\mu(e \in F) \ge 1/2$, equivalently
    \[ w(F) = \sum_{e \in F} w(e) = \sum_{e \in F} w(\partial S_p) \cdot \mu(e) = w(\partial S_p) \cdot \mu(F) \ge \frac{w(\partial S_p)}{2}. \]
\end{itemize}
Using the observation that the endpoint of an edge $f \in F$ that is
not contained in $S_p$ should have potential at least $p -
2/w(\partial S_p)$, we obtain
\[  \vol(S_{p - 2/w(\partial S_p)})\geq \vol(S_p) + w(F)  \ge \vol(S_p) + \frac{w(\partial S_p)}{2}.\]
Assuming $\vol(\partial S_p)\leq \vol(G)/2$, using the \ref{eq:wbdbd}
property, we have $w(\partial S_p) \geq c\vol(S_p)^{1/2+\ee}$.  So,
from above we get
$$ \vol(S_{p-2/c\vol(S_p)^{1/2+\ee}}) \geq \vol(S_{p-2/w(\partial
  S_p)}) \geq \vol(S_p) + \frac{c\vol(S_p)^{1/2+\ee}}{2},$$ where
in the first inequality we also used that $\vol(S_p)$ increases as $p$
decreases.  Now, iterating this procedure $2
\vol(S_p)^{1/2-\ee}/c$ times we obtain
\begin{equation}
\vol\parens*{ S_{p - \frac{4}{c^2\vol(S_p)^{2\ee}}}} =
\vol\parens*{S_{p - \frac{2}{c\vol(S_p)^{1/2+\ee}} \cdot \frac{2 \vol(S_p)^{1/2-\ee}}{c}}} \ge  2 \vol(S_p),
  \label{eq:voldrop}
\end{equation}
as $\vol(S_p)$ increases as $p$ decreases.  We set $p_0 = p(s)$,
then $\vol(S_{p_0}) = \deg(s)$. Inductively define
\[  p_{k + 1} = p_k - \frac{4}{c^2 \vol(S_{p_k})^{2\ee}}.\]
Then, using the inequality \eqref{eq:voldrop}, we have
\begin{equation}
  \vol(S_{p_{k + 1}}) \ge 2 \cdot \vol(S_{p_k}) \label{eq:volhalf}.
\end{equation}
Note that we can run the above procedure as long as $\vol(S_p) \leq
\vol(G)/2$. Therefore, for some $k^\star \lesssim \log
\frac{\vol(G)}{\deg(s)}$, we must have
\[  \vol(G) \ge 2 \cdot \vol(S_{p_{k^\star}}) \ge \vol(G)/2.\]
Therefore,
\[ p_0 - p_{k^\star} \le 4 \cdot \sum_{j = 0}^{k^\star} \frac{1}{c^2
    \vol(S_{p_i})^{2\ee} }. \]

Using \eqref{eq:volhalf} we get
\[  p_0 - p_{k^\star} \lesssim \frac{1}{c^2 \vol(S_0)^{2 \ee}} \cdot \sum_{j = 0}^{k^\star} \frac{1}{2^{2 j\ee}} \lesssim \frac{1}{\deg(s)^{2\ee} \cdot c^2 \cdot \ee},\]
where the last inequality is a geometric sum with ratio $\approx 1/(1+\eps)$.

By a similar argument (sending flow from $t$ to $s$), we see that more
than half of the vertices have potential smaller than
\[ \frac{1}{\deg(t)^{2\ee}} \cdot \frac{1}{\ee \cdot c^2}.\]
Combining these two bounds, we obtain
\[  \Reff(s, t) =  p(s) \lesssim \parens*{\frac{1}{\deg(s)^{2\ee}} + \frac{1}{\deg(t)^{2\ee}}} \cdot \frac{1}{\ee \cdot c^2}, \]
where the equality follows since the flow is a unit flow.
\end{proof}

\begin{remark}
  For our proof to go through, we do not need the \ref{eq:wbdbd}
  condition to be satisfied by all cuts. It suffices to have this
  condition satisfied by electric potential threshold cuts $(S_p,
  S_p^c)$ only.

  For computational purposes, it will be important to show that our
  argument is robust to small perturbations in the potentials, i.e.~we
  need to show that the proof will still go through when we are
  working with threshold cuts with respect to a vector $\hat{\ p}$
  which is close to the electric potential vector $\vecc p$, rather
  than working with the potential vector $\vecc p$ directly. We will
  show this in Appendix \ref{app:accuracy}, Theorem
  \ref{thm:actualwtshnim}.
\end{remark}

\subsection{Finding the Sparse Cuts Algorithmically}

Next we prove Corollary \ref{cor:smallcut}.

\smallcut*
\begin{proof}
First, we prove the existence of $U$.  Let $u,v\in V$ such that
\begin{equation}\label{eq:maxreffuv}
\Reff(u,v)=\rdim.
\end{equation}
The choice of
\begin{equation} \label{e:c}
c\asymp \sqrt{\frac{\frac1{\deg(u)^{2\ee}}+ \frac1{\deg(v)^{2\ee}}}{\Reff(u,v)\cdot \ee}}
\quad \textrm{ ensures } \quad
\Reff(u,v) > \parens*{\frac{1}{\deg(s)^{2\ee}} + \frac{1}{\deg(t)^{2\ee}  }} \cdot \frac{1}{\ee \cdot c^2}.
\end{equation}
Then, by Theorem~\ref{thm:wtshnim},
there must be a threshold set $S_p$ of the potential vector $\vecc p$ corresponding to sending one unit of electrical flow from $u$ to $v$ such that
\[\Phi(U) \lesssim \frac{c}{\vol(U)^{1/2 - \ee}}
\lesssim \frac{\alpha^\ee}{\sqrt{\ee\cdot \rdim}} \cdot \vol(U)^{\ee - 1/2},
\]
where the last inequality follows from our assumption that $\deg(v) \geq 1/\alpha$ for all $v \in V$.
This proves the first part of the corollary.

It remains to devise a near linear time algorithm to find the set
$U$. First, suppose that we are given the optimum pair of vertices
$u,v$ satisfying \eqref{eq:maxreffuv}.  Using the Spielman-Teng solver
(Lemma \ref{lem:stsolver}), we can compute the potential vector
$\vecc p$ corresponding to sending one unit of electrical flow from to
$u$ to $v$ in time
$\OOs\parens*{m \cdot \log\parens*{\frac{w(E)}{\min_e w(e)}}}$. We can
then sort the vertices by their potential values in time
$\OO(n \log n) = \OOs(m)$.  Finally, we simply go over the sorted list
and find the least expanding level set. This can be done in $\OO(m)$
time in total, since getting $\partial S_{p(v_i)}$ from
$\partial S_{ p(v_{i + 1})}$ (resp.~$\vol(S_{p(v_i)})$ from
$\vol(S_{p(v_{i + 1})})$) can be done by considering the $\deg(v_i)$
edges $e \in \partial(v_i)$ incident to $v_i$.

It remains to find such an optimal pair of vertices $u,v$ satisfying
\eqref{eq:maxreffuv}.  Instead, we find a pair of vertices $u',v'$
such that $\Reff(u',v')\geq \rdim/3$, which is enough for our
purposes as this only causes a constant factor loss in the
conductance of $U$.

\begin{lemma}\label{lem:furthpoints}
  Let $G$ be a weighted graph. In time $\OOs(m)$, one can compute a
  pair of vertices $u, v \in V$ satisfying
  \[ \Reff(u,v)\geq \rdim/3. \]
\end{lemma}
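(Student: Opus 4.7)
The plan is to use a ``one-pivot'' strategy driven by the triangle inequality for effective resistance together with the approximate all-pairs routine of Lemma~\ref{lem:singsourceeffres}. Concretely, I would fix an arbitrary root $r \in V$ (say the first vertex), invoke Lemma~\ref{lem:singsourceeffres} on the set $S = \set{(r,v) : v \in V}$ with a small constant accuracy parameter $\beta$ to obtain numbers $A(r,v)$ satisfying $e^{-\beta}\Reff(r,v) \le A(r,v) \le e^{\beta}\Reff(r,v)$, then return the pair $(r,u)$ where $u = \arg\max_{v \in V} A(r,v)$.

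The correctness follows from the fact that effective resistance is a metric (property~3 listed in Section~\ref{sec:prelim}). Let $u^\star, v^\star \in V$ be a pair realizing $\Reff(u^\star,v^\star) = \rdim$. The triangle inequality applied at $r$ gives
\[ \Reff(r,u^\star) + \Reff(r,v^\star) \;\ge\; \Reff(u^\star,v^\star) \;=\; \rdim, \]
so $\max_{v \in V}\Reff(r,v) \ge \rdim/2$. Combining with the two-sided approximation guarantee of Lemma~\ref{lem:singsourceeffres},
\[ \Reff(r,u) \;\ge\; e^{-\beta} A(r,u) \;\ge\; e^{-\beta} \max_{v \in V} A(r,v) \;\ge\; e^{-2\beta} \max_{v \in V} \Reff(r,v) \;\ge\; \tfrac{1}{2} e^{-2\beta}\cdot \rdim. \]
Choosing $\beta$ to be any positive constant with $e^{-2\beta} \ge 2/3$ (for instance $\beta = \tfrac{1}{2}\ln(3/2)$) yields $\Reff(r,u) \ge \rdim/3$, as required.

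For the running time, Lemma~\ref{lem:singsourceeffres} with $|S| = n$ and constant $\beta$ runs in time $\OOs(m + n/\beta^2) = \OOs(m)$ (assuming the graph is connected, so $n \le m + 1$; otherwise we first compute connected components in $\OO(m)$ time and apply the lemma per component, returning the pair from the component with the largest estimate). Selecting $u = \arg\max_v A(r,v)$ costs an additional $\OO(n)$ time. The main conceptual obstacle is simply picking the constant $\beta$ small enough so that two successive factors of $e^{\pm\beta}$ -- one to pass from the unknown $\Reff(r,v^\star)$ to our estimate, and another to pass from the maximizing $A$-value back to the true $\Reff$ -- together lose only a factor of $3/2$ relative to the triangle-inequality bound of $\rdim/2$; any constant $\beta \le \tfrac{1}{2}\ln(3/2)$ suffices and has no asymptotic effect on the running time.
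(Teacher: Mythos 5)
Your proposal is correct and follows essentially the same route as the paper's own proof: fix an arbitrary pivot, use the triangle inequality for the effective-resistance metric to show the pivot is within a factor $2$ of the diameter, invoke Lemma~\ref{lem:singsourceeffres} with constant accuracy $\beta$ to approximately maximize over the second endpoint, and absorb the resulting $e^{2\beta}$ loss by choosing $\beta$ a small enough constant. The only (minor) addition on your part is the explicit handling of disconnected graphs in the running-time discussion, which the paper leaves implicit.
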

\begin{proof}
  By the triangle inequality for effective resistances, we have the
  following inequality for any $u \in V$:
  \begin{equation}
    \rdim \le 2\max_{v \in V} \Reff(u, v) \label{eq:fixv}.
  \end{equation}
  Thus, we fix a $u \in V$. Appyling Lemma \ref{lem:singsourceeffres}
  (with $S = \set{u} \times V$), we get the numbers $A(u, v)$ which
  multiplicatively approximate $\Reff(u, v)$ within a factor
  $e^{\beta}$.  Let $v^* = \arg\max_{v \in V} A(u, v)$. By combining
  the inequality \eqref{eq:fixv} with
  \[ \max_{v \in V} \Reff(u, v) \le e^{\beta} \max_{v \in V} A(u, v) = e^{\beta} A(u, v^*) \le e^{2\beta} \Reff(u, v^*), \]
  we obtain $\Reff(u,v^*)\geq \rdim/3$ for some $\beta=\Theta(1)$.
  The algorithm consists of an application of
  Lemma~\ref{lem:singsourceeffres} with $|S| = n$, and a linear scan
  for finding the maximum. Hence, the time bound follows.
\end{proof}

So, Corollary~\ref{cor:smallcut} follows by first using
Lemma~\ref{lem:furthpoints} to find $u',v'$ with $\Reff(u',v') \geq
R/3$, and then apply Theorem~\ref{thm:wtshnim} with the choice of $c$
as described in~(\ref{e:c}).
\end{proof}

\begin{remark}
  We have avoided treating the issues caused by working with an
  approximate potential vector for the sake of clarity. This issue is
  addressed in Appendix \ref{app:accuracy}, Corollary
  \ref{cor:actualsmallcut}.
\end{remark}

\section{Low Effective Resistance Diameter Graph Decomposition}

In this section we prove Theorem \ref{thm:lapsolv}.

\lapsolv*
\begin{proof}
Let $R$ be the target effective resistance diameter and $W$ be the target sum of
the weights of edges that we are going to cut. We will write the
algorithm in terms of $R,W$, and we will optimize for these parameters later in
the proof.
Note that $n=|V|$ is the number of vertices of the original graph $G$,
and it is fixed throughout the execution of the following algorithm.

\vspace*{2mm}

\begin{algorithm}{Effective Resistance Partitioning}{A graph $H$, and parameters $R,W,n$.}{A partition $\Par = \set{V_i \mid i = 1, \ldots, h}$ of $V(H)$.}\label{alg:effrespar2}
  \begin{enumerate}
  \item If there is a vertex $v\in V(H)$ such that $\deg_H(v) \leq W/(2n)$, then delete all the edges incident to $v$. Repeat this step until there are no such vertices in the remaining graph $H$.
  \item Use   Lemma \ref{lem:furthpoints} to find vertices $u,v$ such that $\Reff(u,v)\geq \rdim(H)/3$.
  \item If $\Reff(u, v) \le R$, return $\set{V(H)}$.
  \item Otherwise, find the cut $(U, U^c)$ with
    $\Phi_H(U) \lesssim \frac{(n/W)^\ee}{\sqrt{\ee\cdot R}} \cdot \vol_H(U)^{\ee - 1/2}$ by
    invoking Corollary \ref{cor:smallcut}, with minimum degree at least $W/(2n)$ and $\ee=1/4$.
  \item Call the algorithm recursively on $H[U]$ and $H[U^c]$.
  \item Return the union of the outputs of both recursive calls.
  \end{enumerate}
\end{algorithm}

First of all, by construction, every set $V_i$ in the output partition
satisfies $\rdim(G[V_i])\leq 3R$.  It is not hard to see that the
running time is $\OOs(n\cdot m \cdot \log(w(E)/\min_e w(e)))$, as the
most expensive of the above algorithm takes time $\OOs(m \cdot
\log(w(E)/\min_e w(e)))$, and we make at most $n$ recursive calls.

It remains to calculate the sum of the weights of all edges that we cut.
Note that we cut edges either when a vertex has a low degree or when we find a low conductance set $U$.
We classify the cut edges into two types as follows:
\begin{enumerate}[i)]
 \item Edges $e$ where $e$ is cut as an incident edge of a vertex $v$
   with $\deg_H(v) \leq W/2n$.
 \item The rest of the edges, i.e., edges $e$ where $e\in \partial_H(U)$ for some $U$ where $\Phi_H(U) \lesssim \frac{(n/W)^\ee}{\sqrt{\ee\cdot R}} \vol_H(U)^{\ee-1/2}$.
\end{enumerate}
We observe that we are going to remove edges of type (i) for at most
$n$ times, because each such removal isolates a vertex of $G$.  So, the sum
of the weights of edges of type (i) that we cut is at most $n\cdot
W/2n\leq W/2$. It remains to bound the sum of the weight of edges of
type (ii) that we cut.

We use an amortization argument: Let $\Psi(e)$ stand for the tokens
charged from an edge. We assume that for each edge $e \in E$, the
number of tokens $\Psi(e)$ is initially set to $0$. Every time we make
a cut of type (ii), we assume without loss of generality that
$\vol_H(U) \le \vol(H)/2$ and we modify the number of tokens as
follows
\begin{equation}\label{eq:Psidef} \Psi(e) :=
  \begin{cases}
    \Psi(e) + \frac{w(\partial_H U)}{w(E_H(U))}  & \tif e \in E_H(U)\\
    \Psi(e) & {\rm~ otherwise}.
  \end{cases}
\end{equation}
By definition, after the termination of the algorithm, we have
\begin{equation}
  w\parens*{\text{set of cut edges of type (ii)}} = \sum_{e \in E} \Psi(e) \cdot w(e). \label{eq:cut}
\end{equation}

Therefore, to bound the total weight of type (ii) edges that are cut,
it is enough to show that no edge is charged with too many tokens
provided $R$ is large enough.
\begin{claim}\label{claim:lc2}
  If $R \gtrsim n/(\ee W)$, we will have
  $\Psi(e) \lesssim \frac{4}{\sqrt{R\cdot W/8n}-1}$ for all edges $e
  \in E$ after the termination of the algorithm.
\end{claim}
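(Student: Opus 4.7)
The plan is to bound $\Psi(e)$ for a fixed edge $e \in E$ by summing the per-step contributions in \eqref{eq:Psidef} across the recursive calls that contain $e$, using two ingredients: a charging step contributes to $\Psi(e)$ only when $e$ lies on the smaller side $U$ of the cut, and in that case the conductance $\Phi(U)$ is controlled by Corollary~\ref{cor:smallcut}.

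I would fix $e$ and enumerate the charging events for $e$, letting $H_j$ denote the subgraph containing $e$ at the $j$-th such event, $U_j$ the smaller side of the cut of $H_j$ (with $e \in E_{H_j}(U_j)$), and $V_j := \vol_{H_j}(U_j)$, for $j = 1, \ldots, J$. The identity $\vol_{H_j}(U_j) = 2\,w(E_{H_j}(U_j)) + w(\partial_{H_j} U_j)$ lets me rewrite the $j$-th token increment as
\[
\frac{w(\partial_{H_j} U_j)}{w(E_{H_j}(U_j))} \;=\; \frac{2\,\Phi_{H_j}(U_j)}{1 - \Phi_{H_j}(U_j)}.
\]
Applying Corollary~\ref{cor:smallcut} with $\ee = 1/4$ and $\alpha = 2n/W$ (Step~1 of the algorithm guarantees that $H_j$ has minimum weighted degree $\geq W/(2n)$, and when Step~4 is reached we have $\rdim(H_j) > R$ via Step~3 combined with Lemma~\ref{lem:furthpoints}) yields
\[
\Phi_{H_j}(U_j) \;\lesssim\; \frac{(n/W)^{1/4}}{\sqrt{R}}\; V_j^{-1/4}.
\]

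The key structural observation is that $V_j$ halves at every charging event: between events $j$ and $j+1$, the subgraph containing $e$ can only shrink (through intervening Step~1 removals or recursive cuts in which $e$ lies on the smaller side), so $\vol(H_{j+1}) \leq \vol(H_j[U_j]) \leq V_j$; since $U_{j+1}$ is the smaller side of $H_{j+1}$, this gives $V_{j+1} \leq \vol(H_{j+1})/2 \leq V_j/2$. Consequently $V_j \geq 2^{J-j}\, V_J$, and the series $\sum_{j} V_j^{-1/4}$ is geometric and dominated by its final term $V_J^{-1/4}$. Moreover, since $U_j$ contains both endpoints of $e$, each of weighted degree $\geq W/(2n)$ in $H_j$, I obtain the uniform lower bound $V_j \geq W/n$.

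Plugging $V_j \geq W/n$ back into the conductance bound gives $\Phi_{H_j}(U_j) \lesssim \sqrt{n/(RW)} \leq 1/2$ as soon as $R \gtrsim n/W$, which is supplied exactly by the hypothesis $R \gtrsim n/(\ee W)$ with $\ee = 1/4$. Under this bound the per-step contribution $\frac{2\Phi}{1-\Phi}$ is at most $4\,\Phi_{H_j}(U_j)$, and summing the geometric series yields
\[
\Psi(e) \;\lesssim\; \frac{(n/W)^{1/4}}{\sqrt{R}} \cdot V_J^{-1/4} \;\lesssim\; \sqrt{\tfrac{n}{RW}},
\]
which matches $4/(\sqrt{RW/8n} - 1)$ up to absolute constants as soon as $RW/n$ is bounded below. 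I expect the subtle point to be the halving argument: charging events for $e$ are not consecutive in the global recursion tree, since intervening cuts may place $e$ on the larger side, so one must carefully argue that the subgraph containing $e$ only shrinks between consecutive charging events and that the smaller-side choice at each event still yields $V_{j+1} \leq V_j/2$. Once that is established, the remaining work is constant-chasing so that the asymptotic bound $\sqrt{n/(RW)}$ is absorbed into the precise expression $4/(\sqrt{RW/8n} - 1)$ in the statement -- this is where keeping $\frac{2\Phi}{1-\Phi}$ rather than $4\Phi$ throughout the sum naturally produces the ``$-1$'' in the denominator.
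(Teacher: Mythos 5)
Your proof is correct and follows essentially the same approach as the paper's: fix an edge $e$, observe that $\vol_{H_j}(U_j)$ halves between consecutive charging events for $e$ (because $H_{j+1} \subseteq H_j[U_j]$ and $U_{j+1}$ is the smaller side of $H_{j+1}$), bound each token increment $\Diff\Psi(e) = w(\partial_{H_j} U_j)/w(E_{H_j}(U_j))$ via the conductance guarantee from Corollary~\ref{cor:smallcut}, use the minimum-degree guarantee from Step~1 to lower-bound the volume of the final charged set, and sum the resulting geometric series. The only cosmetic differences are that you index the charging events forward while the paper indexes them backward from the last event, and that you use the hypothesis $R \gtrsim n/(\ee W)$ to conclude $\Phi_{H_j}(U_j) \le 1/2$ and then simplify $\tfrac{2\Phi}{1-\Phi} \le 4\Phi$, whereas the paper keeps the increment in the form $\tfrac{2c}{\vol^{1/2-\ee}-c}$ and tracks the ``$-1$'' explicitly; both routes give the stated asymptotic bound.
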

\begin{proof}
  Fix an edge $e \in E$.  Let $\Diff \Psi(e)$ be the increment of $\Psi(e)$
  due to a cut $(U, U^c)$.  We have
 \begin{equation}\label{eq:Psiinc} \Diff \Psi(e) = \frac{w(\partial_H U)}{w(E_H(U))} = 2 \cdot
      \frac{w(\partial_H U)}{\vol_H(U) - w(\partial_H U)} = 2
      \cdot \frac{1}{\frac{1}{\Phi_H(U)} - 1} \lesssim \frac{2c}{\vol_H(U)^{1/2-\ee}-c},
\end{equation}
where $c$ is chosen as in~(\ref{e:c}) in the proof of Corollary~\ref{cor:smallcut} so that $\Phi(U) \leq c/\vol(U)^{1/2-\ee}$ for the last inequality to hold.
Since the minimum degree is at least $W/2n$ by Step (1) of the algorithm, we have
\[c \asymp
\frac{(2n/W)^\ee}{\sqrt{\ee\cdot R}}.
\]
The minimum degree condition also implies that $\vol_H(U) \geq W/(2n)$.
Note that the denominator of the rightmost term of~(\ref{eq:Psiinc}) is non-negative as long as $\vol_H(U)^{1/2-\eps} \geq (W/2n)^{1/2 - \ee} \geq c$, which holds when $R \gtrsim n/(\ee W)$.

Let $U_0\subseteq V(H_0)$ be the set for which $e$ was charged for the
last time, and in general $U_k\subseteq V(H_k)$ be the $k$-th last set
for which $e$ was charged.  We write $\Delta_k \Psi(e)$ to denote the
increment in $\Psi(e)$ due to $U_k$.

Note that by \eqref{eq:Psidef} we have $e\in E_{H_i}(U_i)$ for all
$i$.  Furthermore, since $\vol_{H_i}(U_i) \leq \vol(H_i)/2 \leq
\vol_{H_{i+1}}(U_{i+1})/2$ for all $i$, we have
\begin{equation}\label{eq:Ukincrement}
\vol_{H_k}(U_k) \geq 2^k \vol_{H_0}(U_0)
\end{equation}
 for all $k\geq 0$. Therefore, using \eqref{eq:Psiinc} and
 \eqref{eq:Ukincrement}, we can write
  \begin{eqnarray*}
    \Psi(e) = \sum_{k \ge 0} \Diff_k \Psi(e) &\le& \sum_{k\geq 0}
    \frac{2c}{\vol_{H_i}(U_i)^{1/2-\ee} -c} \\ &\leq& \sum_{k \ge 0}
    \frac{2c}{(2^{k} \vol_{H_0}(U_0))^{1/2 - \ee} - c} \\ &\le&
    \frac{2c}{\vol_{H_0}(U_0)^{1/2 - \ee}-c} \cdot \sum_{k \ge 0}
    \frac{1}{(2^{1/2 - \ee})^k},
  \end{eqnarray*}
where the last inequality assumes that $\ee < 1/2$.
As argued before, the minimum degree condition implies that every vertex is of degree at least $W/2n$ and thus $\vol_{H_0}(U_0) \geq W/(2n)$.
Therefore, by the geometric sum formula, we have
\[ \Psi(e) \leq \frac{2}{\frac1c (W/2n)^{1/2-\ee}-1} \cdot
\frac1{1-2^{\ee-1/2}}.\]
Plugging the value of $c$ and setting $\ee=1/4<1/2$, we conclude that
\[ \Psi(e) \lesssim \frac{2}{\sqrt{\ee\cdot R \cdot W/2n} -1}
\leq \frac{4}{\sqrt{R\cdot W/8n}-1}.\]
\end{proof}

Setting $R \asymp \delta^2 \cdot n/W$ for a sufficiently large
$\delta^2 >1$ so that the assumption of Claim~\ref{claim:lc2} is
satisfied, it follows from~(\ref{eq:cut}) that the sum of the weights
of all cut edges is at most
$$ W/2 + \sum_e \Psi(e)\cdot w(e) \lesssim W/2 + \frac{w(E)}{\delta}.$$
Setting $W=w(E)/\delta$ proves the theorem.
This completes the proof of Theorem \ref{thm:lapsolv}.
\end{proof}

\section{Conclusions and Open Problems}

We have shown that we can decompose a graph into components of bounded
effective resistance diameter while losing only a small number of edges.
There are few questions which arise naturally from this work.
\begin{enumerate}
\item Can the decomposition in Theorem~\ref{thm:lapsolv} be computed
  in near linear time?  Is this decomposition useful in generating a
  random spanning tree?

\item For the Unique Games Conjecture, Theorem~\ref{thm:lapsolv}
  implies that we can restrict our attention to graphs with bounded
  effective resistance diameter.  Can we solve Unique Games instances
  better in such graphs?  More generally, are there some natural and
  nontrivial problems that can be solved effectively in graphs of
  bounded effective resistance diameter?

\item Is there a generalization of Theorem~\ref{thm:wtshnim} to
  multi-partitioning, i.e.~does the existence of $k$-vertices with
  high pairwise effective resistance distance help us in finding a
  $k$-partitioning of the graph where every cut is very sparse?

\item Theorem~\ref{thm:wtshnim} says that a small-set expander has
  bounded effective resistance diameter.  Is it possible to strengthen
  Theorem~\ref{thm:lapsolv} to show that every graph can be decomposed
  into small-set expanders?  This may be used to show that the
  Small-Set Expansion Conjecture and the Unique Games Conjecture are
  equivalent, depending on the quantitative bounds.
\end{enumerate}

\section*{Acknowledgements}
We would like to thank Hong Zhou for helpful discussions and anonymous
referees for their useful suggestions.

\bibliography{itcsrefs}
\bibliographystyle{alpha}

\appendix

\section{Robustness of the Proof of Theorem \ref{thm:wtshnim}}\label{app:accuracy}

We avoided the issue of picking the accuracy parameter $\eps > 0$
for the Laplacian solver we used in Corollary \ref{cor:smallcut}.
Here, we want to show that the proof is robust enough to small
perturbations in the potential vector, i.e.~using a
Laplacian solver to estimate $s$-$t$ potential vector $\vecc p \in \RR^V$
by the vector $\hat{\vecc p}$, additively within an accuracy of
$\eta$, we can still recover our sparse cut.

We first start by noting that $|p(v) - \hat p(v)| \le \eta$ is implied by the stronger inequality,
\begin{equation}
  \norm{\hat{\vecc p} - \vecc p}^2 \le \eta^2. \label{eq:desiredacc}
\end{equation}

We will show that if $\eta$ is polynomially small in the input data,
we can still find a sparse cut. Our plan is as follows.
\begin{itemize}
\item We will figure out how small we should set the Laplacian solver accuracy
  $\eps$ to ensure \eqref{eq:desiredacc} (Lemma \ref{lem:solveracc}).
\item We will show that using the mild-expansion of the threshold sets $T_{\hat p}$
  of the vector $\hat{\vecc p}$, we can still prove upper bounds on
  the effective resistance (Theorem \ref{thm:actualwtshnim}).
\item Analogously to Corollary \ref{cor:smallcut}, we will show that
  by way of contraposition the existence of a pair with large
  effective resistance distance means one of the threshold sets
  $T_{\hat p}$ does not satisfy the mild expansion property (Corollary
  \ref{cor:actualsmallcut}).
\end{itemize}

\subsection{Eigenvalue Bound}

We start with a simple eigenvalue bound that will be used to bound the accurancy needed.

\begin{claim}
  For any connected weighted graph $G = (V, E, \vecc w)$, we have
  \begin{equation}
 \lambda_2(G) \gtrsim \min_e w(e) \cdot \parens*{\frac{\min_e w(e)}{w(E)} }^2. \label{eq:eigbd}
\end{equation}
\end{claim}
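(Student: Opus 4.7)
The plan is to reduce to the unweighted case and then to invoke a standard spread-plus-path argument. Since every edge weight satisfies $w(e) \ge w_0 := \min_e w(e)$, for any $\vecc x \perp \vecc 1$ we have
\[
\vecc x^{\top}\matr L_G\vecc x \;=\; \sum_{uv\in E} w(uv)(x_u-x_v)^2 \;\ge\; w_0 \sum_{uv\in E}(x_u-x_v)^2 \;=\; w_0\,\vecc x^{\top}\matr L_{G_0}\vecc x,
\]
where $G_0$ is the unweighted version of $G$ (same edges, unit weights). Taking the infimum of the Rayleigh quotient over unit $\vecc x\perp\vecc 1$ yields $\lambda_2(G) \ge w_0 \cdot \lambda_2(G_0)$, so it suffices to show $\lambda_2(G_0)\gtrsim 1/n^2$ for any connected unweighted graph on $n$ vertices, and then cash out $1/n^2$ via connectedness.

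For the unweighted bound, I would fix a unit $\vecc x\perp \vecc 1$ and set $M=\max_v x_v$, $m=\min_v x_v$, so $m<0<M$. Expanding $(x_v-m)(M-x_v)\ge 0$, summing over $v$, and using $\sum_v x_v=0$ together with $\sum_v x_v^2=1$, one obtains $-nmM \ge 1$, i.e.\ $-mM \ge 1/n$. Since $-m,M>0$, AM--GM gives the ``spread'' estimate $M-m = |m|+M \ge 2\sqrt{-mM} \ge 2/\sqrt n$. Let $u,v$ achieve the max and min; take a shortest path $u=u_0,u_1,\ldots,u_D=v$ in $G_0$, which has length $D\le n-1$. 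Telescoping and Cauchy--Schwarz give
\[
(M-m)^2 \;=\; \Bigl(\sum_{i=0}^{D-1}(x_{u_i}-x_{u_{i+1}})\Bigr)^2 \;\le\; D\sum_{i=0}^{D-1}(x_{u_i}-x_{u_{i+1}})^2 \;\le\; D\cdot \vecc x^{\top}\matr L_{G_0}\vecc x,
\]
so $\vecc x^{\top}\matr L_{G_0}\vecc x \ge (M-m)^2/D \ge 4/(nD) \ge 4/n^2$, and hence $\lambda_2(G_0)\ge 4/n^2$.

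Finally, I would translate $1/n^2$ into $(w_0/w(E))^2$ using connectedness: $|E|\ge n-1$ implies $w(E)\ge (n-1)w_0$, and for $n\ge 2$ this gives $n \le 2w(E)/w_0$, so $1/n^2 \ge w_0^2/(4w(E)^2)$. Chaining the three bounds,
\[
\lambda_2(G) \;\ge\; w_0\cdot \frac{4}{n^2} \;\gtrsim\; w_0\cdot\Bigl(\frac{w_0}{w(E)}\Bigr)^2,
\]
which is exactly \eqref{eq:eigbd}. There is no serious obstacle: each step is elementary. The slickest point is the spread inequality $M-m\ge 2/\sqrt n$, which is the key ingredient that lets a single-edge-weight reduction convert into the correct $w_0^3/w(E)^2$ dependence after combining with the trivial counting $n-1 \le |E| \le w(E)/w_0$.
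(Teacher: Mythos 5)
Your proof is correct, but it takes a genuinely different route from the paper's. The paper's argument is spectral-combinatorial: it first observes that connectedness forces $\Phi(G)\ge \min_e w(e)/w(E)$, then applies the easy direction of Cheeger's inequality to get $\tilde\lambda_2(G)\gtrsim(\min_e w(e)/w(E))^2$ for the \emph{normalized} Laplacian, and finally converts from the normalized to the unnormalized Laplacian via the bound $\lambda_2(G)\ge\min_v\deg(v)\cdot\tilde\lambda_2(G)\ge\min_e w(e)\cdot\tilde\lambda_2(G)$. Your argument instead drops the weights entirely ($\lambda_2(G)\ge w_0\lambda_2(G_0)$), proves a Fiedler-type bound $\lambda_2(G_0)\ge 4/n^2$ from scratch via the spread inequality $M-m\ge 2/\sqrt n$ and a Cauchy--Schwarz telescope along a shortest path, and then substitutes $n\lesssim w(E)/w_0$ at the end. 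Each step checks out: the sign inequality $(x_v-m)(M-x_v)\ge 0$ summed against $\sum_v x_v=0$, $\sum_v x_v^2=1$ does give $-mM\ge 1/n$; the path has $\le n-1$ edges, so $\vecc x^\top\matr L_{G_0}\vecc x\ge(M-m)^2/D\ge 4/n^2$; and $w(E)\ge(n-1)w_0$ converts $1/n^2$ to $\gtrsim(w_0/w(E))^2$. What each approach buys: the paper's route is shorter given that Cheeger's inequality is already a stated tool in Section~\ref{sec:prelim}, and it exposes exactly which intermediate spectral/conductance quantities are being lower-bounded; your route is self-contained and more elementary (no Cheeger, no normalized Laplacian), at the cost of re-deriving the unweighted $1/n^2$ bound and a slightly longer chain of reductions. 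Both give the same polynomial dependence, and indeed both lose the same factor from bounding the cut and volume crudely, so neither is sharper than the other.
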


\begin{proof}
  For any connected weighted graph $G = (V, E, \vecc w)$, we have the following conductance bound,
  \[ \min\set*{\frac{|\partial(S)|}{\vol(S)} : \vol(S) \le \vol(G)/2} = \Phi(G) \ge \frac{\min_e w(e)}{w(E)},\]
which implies
\[ \frac{\min_e w(e)}{w(E)} \lesssim \sqrt{\tilde \lambda_2(G)}
\quad \Longleftrightarrow \quad
\tilde \lambda_2(G) \gtrsim  \parens*{\frac{\min_e w(e)}{w(E)}}^2\]
by \ref{eq:cheeg}.
Note that
\begin{equation}
  \lambda_2(G) = \min_{\vecc u \perp \vecc 1} \frac{\langle \vecc u, \matr L \vecc u \rangle}{\langle \vecc u, \vecc u\rangle} = \min_{\vecc v \perp {\matr D^{1/2} \vecc 1}} \frac{\langle \matr D^{-1/2} \vecc v, \matr L \matr D^{-1/2} \vecc v \rangle}{\langle \matr D^{-1/2} \vecc v, \matr D^{-1/2} \vecc v \rangle}
\end{equation}
where the last equality follows by a change of variables $\vecc u = \matr D^{-1/2} \vecc v$. This implies that
\begin{equation}
  \lambda_2(G) = \min_{\vecc v \perp \matr D^{1/2}\vecc 1} \frac{\langle \vecc v, \tilde{\matr L} \vecc v \rangle}{\langle\vecc v, \matr D^{-1} \vecc v\rangle} \ge \min_{\vecc v \perp \matr D^{1/2} \vecc 1} \frac{\langle \vecc v, \tilde{\matr L} \vecc v \rangle}{\norm{\matr D^{-1}} \cdot \langle \vecc v, \vecc v\rangle} \ge \frac{1}{\norm{\matr D^{-1}}} \cdot \tilde{\lambda}_2(G).
\end{equation}
Using $\norm{\matr D^{-1}}^{-1} = \min_v \deg(v) \ge \min_e w(e)$,
we obtain $\lambda_2(G) \ge \min_e w(e) \cdot \tilde \lambda_2(G)$. Combining everything, we get
\begin{equation}
 \lambda_2(G) \gtrsim \min_e w(e) \cdot \parens*{\frac{\min_e w(e)}{w(E)} }^2
\end{equation}
hence proving the claim.
\end{proof}

\subsection{Picking the Laplacian Solver Accuracy}

For $\vecc b = \vecc e_s - \vecc e_t$, the Spielman Teng Solver in
Lemma~\ref{lem:stsolver} produces a vector $\hat{\vecc p}$ such that
\[
\norm{\hat{\vecc p} - \matr L^\dagger \vecc b}_{\matr L} \le
\zeta \cdot \norm{\matr L^\dagger \vecc b}_L.
\]
Letting $\vecc p$ be the $s$-$t$ electric potential vector, this becomes
\[ \norm{\hat{\vecc p} - \vecc p}_{\matr L} \le \zeta \cdot
  \norm{\vecc p}_{\matr L}.\]
Using the definition of the $\matr L$-norm, we have
\[ \inpr*{\hat{\vecc p} - \vecc p, \matr L (\hat{\vecc p} - \vecc p) } \le \zeta^2 \langle \vecc p, \matr L \vecc p\rangle = \zeta^2 \cdot \Reff(s,t).\]
Since we are working on the space orthogonal to the nullspace, both
$\hat{\vecc p}, \vecc p \perp \vecc 1$ and thus
$(\hat{\vecc p} - \vecc p) \perp \vecc 1$.
It follows from the definition of $\lambda_2(G)$ that
\[ \lambda_2(G) \cdot \norm{\hat{\vecc p} - \vecc p}^2 \le \zeta^2 \cdot \Reff(s, t).\]
By the eigenvalue bound in \eqref{eq:eigbd}, we have
\[ \norm{\hat{\vecc p} - \vecc p}^2 \lesssim \zeta^2 \cdot \Reff(s,t) \cdot \frac{w(E)^2}{(\min_e w(e))^3}.\]
Using the trivial bound $\Reff(s, t) \le \frac{m}{\min_e w(e)}$ in a connected graph, we get
\[  \norm{\hat{\vecc p} - \vecc p}^2 \lesssim \zeta^2 \cdot m \cdot \frac{w(E)^2}{(\min_e w(e))^4}.\]
Therefore, we can set
\[\zeta \asymp \frac{\eta \cdot (\min_e w(e))^2}{w(E) \cdot \sqrt{m}}\]
to get the desired accuracy in \eqref{eq:desiredacc}.
The above argument is summarized in the following lemma.

\begin{lemma}\label{lem:solveracc}
  Given a connected weighted graph $G = (V, E, \vecc w)$,
  it is possible to compute an estimate $\hat{\vecc p}$ of the
  $s$-$t$ electric potential vector
  $\vecc p$ within an additive accuracy of $\eta$ using the
  Spielman-Teng solver with accuracy
  \[\zeta \asymp \frac{\eta \cdot (\min_e w(e))^2}{w(E) \cdot \sqrt{m}}.\]
\end{lemma}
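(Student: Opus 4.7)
The plan is to convert the $\matr L$-norm accuracy guarantee furnished by the Spielman--Teng solver into an additive $\ell_2$ accuracy guarantee, using the already-established eigenvalue bound \eqref{eq:eigbd} and a trivial bound on $\Reff(s,t)$. First I would feed $\vecc b = \vecc e_s - \vecc e_t$ into the solver of Lemma~\ref{lem:stsolver}; by \eqref{eq:lapp} the exact solution of $\matr L \vecc p = \vecc b$ is the $s$-$t$ electric potential vector, so the accuracy guarantee reads $\norm{\hat{\vecc p} - \vecc p}_{\matr L}^2 \le \zeta^2 \norm{\vecc p}_{\matr L}^2 = \zeta^2 \cdot \Reff(s,t)$ after recognizing that $\langle \vecc p, \matr L \vecc p\rangle = \langle \vecc e_s - \vecc e_t, \matr L^\dagger(\vecc e_s - \vecc e_t)\rangle = \Reff(s,t)$.

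Next I would exploit that both $\hat{\vecc p}$ and $\vecc p$ lie in the subspace orthogonal to the nullspace $\vecc 1$, so their difference is also orthogonal to $\vecc 1$. The variational characterization of $\lambda_2(G)$ then gives $\lambda_2(G) \cdot \norm{\hat{\vecc p} - \vecc p}^2 \le \langle \hat{\vecc p} - \vecc p, \matr L(\hat{\vecc p} - \vecc p)\rangle \le \zeta^2 \cdot \Reff(s,t)$, converting the $\matr L$-norm guarantee into an $\ell_2$ bound.

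To finish, I would plug in two bounds already present in the paper. The eigenvalue bound \eqref{eq:eigbd} lower-bounds $\lambda_2(G)$ by $\min_e w(e) \cdot (\min_e w(e)/w(E))^2$, and since $G$ is connected, routing unit $s$-$t$ flow along any spanning path yields the trivial bound $\Reff(s,t) \le m/\min_e w(e)$. Combining these gives
\[ \norm{\hat{\vecc p} - \vecc p}^2 \lesssim \zeta^2 \cdot m \cdot \frac{w(E)^2}{(\min_e w(e))^4}, \]
and demanding the right-hand side be at most $\eta^2$ yields the stated choice $\zeta \asymp \eta (\min_e w(e))^2 / (w(E) \sqrt{m})$.

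There is no real obstacle beyond bookkeeping of the polynomial factors; the only mild subtlety is to make sure the $\matr L$-norm bound is translated via $\lambda_2$ on the correct subspace (the orthogonal complement of $\vecc 1$), which is automatic because both the true and computed solutions are centered. Since $\zeta$ appears only as $\log(1/\zeta)$ in the runtime of Lemma~\ref{lem:stsolver}, the polynomial dependence on $1/\eta$, $w(E)$, $1/\min_e w(e)$, and $m$ translates into only polylogarithmic factors in the final running time, which is exactly what is needed to absorb this step into the $\OOs$ bounds quoted in Corollary~\ref{cor:smallcut} and Theorem~\ref{thm:lapsolv}.
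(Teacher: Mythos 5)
Your proposal is correct and follows essentially the same route as the paper: convert the Spielman--Teng $\matr L$-norm guarantee to an $\ell_2$ bound via the variational characterization of $\lambda_2$ on the subspace orthogonal to $\vecc 1$, then combine the eigenvalue bound \eqref{eq:eigbd} with the crude bound $\Reff(s,t) \le m/\min_e w(e)$ to solve for $\zeta$. The only cosmetic difference is your justification of the crude $\Reff$ bound via a ``spanning path'' (any $s$-$t$ path of at most $m$ edges, each of resistance at most $1/\min_e w(e)$, suffices and always exists in a connected graph), which the paper simply asserts.
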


\subsection{How Small Should We Pick $\eta$?}

In the proof of Theorem \ref{thm:wtshnim}, we used the actual
potential vector to bound the effective resistance. This is too
expensive for algorithmic purposes. We now show that we can use the
estimate $\hat{\vecc p}$ and the potential sets
\[ T_{\hat p} = \set*{ v \in V : \hat{p}(v) \ge \hat p }\]
at a small cost.
We will show that the mild-expansion of these cuts allows us to bound
the effective resistance from above, just as in Theorem~\ref{thm:wtshnim}.

\begin{theorem}\label{thm:actualwtshnim}
  Let $\hat{\vecc p}$ be an additive $\eta$-approximation of the
  electric potential vector $\vecc p$ between $s$ and $t$, i.e.~
  \[ |\hat p(u) - p(u)| \le \eta \quad \forall u \in V. \]
  If all the threshold cuts $T_{\hat p}$ satisfy the mild expansion condition,
  \[ w(\partial T_{\hat p}) \ge c \cdot \vol(T_{\hat p})^{1/2 + \ee} \]
  Then, we have
  \[ \Reff(s, t) \lesssim \frac{1}{\ee c^2 \deg(s)^{2\ee}} + \frac{1}{\ee c^2 \deg(t)^{2\ee}} + \frac{\eta (48 m^{1/2 - \ee} \log n + 2c)}{c}. \]
\end{theorem}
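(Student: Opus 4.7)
The plan is to replay the proof of Theorem~\ref{thm:wtshnim} with the threshold sets $T_{\hat p}$ of the approximate potential vector $\hat{\vecc p}$ in place of the true level sets $S_p$, while carefully tracking the $\eta$-slack introduced at each step. The one quantitative fact that changes from the exact setting is the amount of flow crossing $\partial T_{\hat p}$: in the original proof every boundary edge carried flow ``downhill'', giving $\sum_{e\in\partial S_p}|f(e)|=1$; here, a boundary edge $e=uv$ with $u\in T_{\hat p}$ may carry flow in the opposite direction if $p(u)<p(v)$. Crucially, for any such \emph{wrong-direction} edge the $\eta$-closeness forces $0\le p(v)-p(u)\le 2\eta$ (using $\hat p(u)\ge \hat p(v)$ and $|p-\hat p|\le\eta$), so $|f(e)|=w(e)\,|p(u)-p(v)|\le 2\eta\, w(e)$. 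Since the net flow through the cut is still exactly $1$, decomposing $\sum|f(e)|$ into right- and wrong-direction contributions yields the clean bound $\sum_{e\in\partial T_{\hat p}}|f(e)|\le 1+4\eta\, w(\partial T_{\hat p})$.

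From here the argument closely mirrors that of Theorem~\ref{thm:wtshnim}. Using Ohm's law together with the bound above, the expected true-potential drop across a random edge of $\partial T_{\hat p}$ (weighted by $\mu(e)=w(e)/w(\partial T_{\hat p})$) is at most $1/w(\partial T_{\hat p})+4\eta$, so Markov's inequality yields a subset $F\subseteq \partial T_{\hat p}$ of weight at least $w(\partial T_{\hat p})/2$ every edge of which satisfies $|\Diff_e\vecc p|\le 2/w(\partial T_{\hat p})+8\eta$. Paying an additional $2\eta$ per edge to pass from $\vecc p$ to $\hat{\vecc p}$ shows that the far endpoints of edges in $F$ all lie in $T_{\hat p-2/w(\partial T_{\hat p})-10\eta}$, giving the volume-growth estimate
\[
\vol\!\bigl(T_{\hat p-2/w(\partial T_{\hat p})-10\eta}\bigr)\ \ge\ \vol(T_{\hat p})+\tfrac{1}{2}\, w(\partial T_{\hat p}).
\]

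I then invoke the mild-expansion hypothesis $w(\partial T_{\hat p})\ge c\,\vol(T_{\hat p})^{1/2+\ee}$ and iterate roughly $2\vol(T_{\hat p})^{1/2-\ee}/c$ times to double the volume, producing a per-doubling drop in $\hat{\vecc p}$ of at most $\tfrac{4}{c^2\vol^{2\ee}}+O\!\bigl(\eta\,\vol^{1/2-\ee}/c\bigr)$. Summing over the $k^\star=O(\log n)$ doublings needed to reach $\vol(G)/2$ (with $\vol_k\asymp 2^k\deg(s)$), the first term telescopes geometrically to the usual $O(1/(\ee c^2\deg(s)^{2\ee}))$, while bounding each $\vol_k^{1/2-\ee}\le m^{1/2-\ee}$ in the second term and multiplying by $k^\star$ gives the $O(\eta\, m^{1/2-\ee}\log n/c)$ contribution. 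A symmetric argument sending unit flow from $t$ produces the analogous bound for the lower half of the potentials, and a final $2\eta=2c\eta/c$ correction translates the resulting estimate on $\hat p(s)-\hat p(t)$ into one on $p(s)-p(t)=\Reff(s,t)$, producing exactly the bound stated in the theorem.

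The main obstacle is the careful bookkeeping of the $\eta$-slack: one must verify that the wrong-direction boundary edges contribute only \emph{additively} (and not multiplicatively) to the drop per sub-iteration, which is precisely what keeps the accumulated error linear in $\eta$. Any multiplicative error at this step would blow up catastrophically when iterated $\Omega(\vol^{1/2-\ee}/c)$ times inside a single doubling, and could not be absorbed into the $m^{1/2-\ee}\log n$ factor.
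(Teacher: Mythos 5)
Your proposal is correct and follows essentially the same route as the paper's proof: split the boundary edges of $T_{\hat p}$ into right- and wrong-direction pieces, use flow conservation plus the $\eta$-closeness to bound the wrong-direction contribution by $O(\eta)\,w(\partial T_{\hat p})$, apply Markov to get a heavy subset of boundary edges with small potential drop, iterate to double the volume (paying the extra $O(\eta\,\vol^{1/2-\ee}/c)$ per doubling), and sum over $O(\log n)$ doublings in each direction, with the final $2\eta$ correction to translate $\hat p(s)-\hat p(t)$ back to $\Reff(s,t)$. The only difference is cosmetic: you apply Markov to $|\Diff_e\vecc p|$ and convert to $\hat{\vecc p}$ afterward (yielding slightly tighter constants, $10\eta$ vs.\ $12\eta$ in the Markov step), while the paper first passes to $|\Diff_e\hat{\vecc p}|$ and then applies Markov; both give the same asymptotic bound.
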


\begin{proof}
  The proof will be very similar to that of Theorem~\ref{thm:wtshnim},
  we will just highlight the differences and carry out the relevant
  computations.

To follow the proof of Theorem~\ref{thm:wtshnim}, we need an upper bound on the quantity
\begin{equation}
    \sum_{e \in \partial T_{\hat p}} w(e) \cdot \Abs{\Diff_e \hat{\vecc p}} \le \sum_{e \in \partial T_{\hat p}} w(e) \cdot (\Abs{\Diff_e \vecc p} + 2\eta) =  \sum_{e \in \partial T_{\hat p}} w(e) \cdot \Abs{\Diff_e \vecc p} + 2\eta \cdot w(\partial T_{\hat p}). \label{eq:ubound}
\end{equation}
where for the first inequality we have used the triangle inequality
\[  \Abs{\Diff_e \hat{\vecc p}} = |\hat p(e^+) - \hat p(e^-)| \le |p(e^+) -
  p(e^-)| + |p(e^+) - \hat p(e^+)| + |p(e^-) - \hat p(e^-)| \le
  \Abs{\Diff_e \vecc p} + 2\eta.\]
Bounding the RHS of \eqref{eq:ubound} is certainly possible by bounding
\begin{equation}
  \sum_{e \in \partial T_{\hat p}} w(e) \cdot \Abs{\Diff_e \hat{\vecc p}}.
\end{equation}
In Theorem~\ref{thm:wtshnim}, we used the Equation \eqref{eq:conserv1} to bound the analogous term, i.e.~
\[\sum_{e \in \partial S_p} |f(e)| = \sum_{e \in \partial S_p} w(e) \cdot \Abs{\Diff_e \vecc p} =  1.\]
We were allowed to do this because $S_p$ is a threshold set, i.e.~the $st$
electric flow flows in one direction: from $S_p$ to $S_p^c$. This
means that flow conservation insures
$\sum_{e \in \partial(S_p)} |f(e)| = 1$, as $s$ has a flow deficit of
a unit, and $t$ has a flow surplus of a unit. This is no longer true
for $T_{\hat p}$ since $T_{\hat p}$ is no longer a threshold set of
the true potential vector $\vecc p$, i.e.~we do not necessarily have
$\sum_{e \in \partial(T_{\hat p})} |f(e)| = 1$.

Before we go on further, we adopt the following convention of taking $uv \in
\partial T_{\hat p}$ to be an edge with $u \in T_{\hat p}$ and $v \not\in
T_{\hat p}$.

In our case the conservation of flow still implies,
\begin{equation}
  \sum_{e \in \partial T_{\hat p}} f(e) = 1. \label{eq:conserv}
\end{equation}
We will take $P^{-}$ to be the set of edges
$uv \in \partial(T_{\hat p})$ with $p(u) < p(v)$, and $P^+$ to be the
set of edges $uv \in \partial(T_{\hat p})$ with $p(u) \ge p(v)$.

Now, note that \eqref{eq:conserv} rewrites into,
\[ \sum_{uv \in P^+} w(uv) \cdot \Abs{\Diff_{uv} \vecc p} - \sum_{uv \in P^-} w(uv) \cdot \Abs{\Diff_{uv} \vecc p} = 1.\]
In particular, we can manipulate this to obtain
\begin{equation}
  \sum_{uv \in \partial T_{\hat p}} w(uv) \cdot \Abs{\Diff_{uv} \vecc p} = \sum_{uv \in P^-} w(u, v) \cdot \Abs{\Diff_{uv} \vecc p} + \sum_{uv \in P^-} w(u, v) \cdot \Abs{\Diff_{uv} \vecc p} =
  1 + 2  \sum_{uv \in P^-} w(u, v) \cdot \Abs{\Diff_{uv} \vecc p}.\label{eq:absflow}
\end{equation}
We see now, proving an upper bound on \eqref{eq:ubound} boils down to upper
bounding $\Abs{\Diff_{uv} \vecc p}$ for $uv \in P^-$. This can be done
by noting $\hat p(u) > \hat p(v)$ (as $uv \in \partial(T_{\hat p})$)
and $\vecc p \approx \hat{\vecc p}$. Formally, we have
\[  p(u) + \eta \ge \hat p(u) \ge \hat p > \hat p(v) \ge p(v) - \eta,\]
which means, we must either have $p(u) \ge p(v)$ or it must be the case that we have $p(u) < p(v)$ and $\Abs{\Diff_{uv} \vecc p} \le
2\eta$. This readily implies the following inequality,
\begin{equation}
  \sum_{e \in P^-} w(e) \cdot \Abs{\Diff_e \vecc p} \le 2\eta \cdot w(\partial T_p).
\end{equation}
Plugging this in \eqref{eq:absflow}, we get
\[  \sum_{e \in \partial T_{\hat p}} w(e) \cdot \Abs{\Diff_e \vecc p} = \sum_{e \in P^+} w(e) \cdot \Abs{\Diff_e \vecc p} + \sum_{e \in P^-} w(e) \cdot \Abs{\Diff_e \vecc p} \le 1 + 4\eta \cdot w(\partial T_{\hat p}).\]
Combining this with \eqref{eq:ubound} yields,
\begin{equation}
  \sum_{e \in \partial T_{\hat p}} w(e) \cdot \Abs{\Diff_e \hat{\vecc p}} \le 1 + 6 \eta \cdot w(\partial T_{\hat p}). \label{eq:ubound2}
\end{equation}
With this bound, we can proceed as in the proof of Theorem \ref{thm:wtshnim}.
Normalizing \eqref{eq:ubound2} we obtain,
\[  \sum_{e \in \partial T_{\hat p}} \frac{w(e)}{w(\partial T_{\hat p})} \cdot \Abs{\Diff_e \hat{\vecc p}} \le \frac{1}{w(\partial T_{\hat p})} + 6\eta. \]

Since $\mu(e) = w(e)/w(\partial T_{\hat p})$ is a probability
distribution, analogously as in the proof of Theorem
\ref{thm:wtshnim}, we obtain a set $F \subseteq \partial T_{\hat p}$
by Markov's inequality such that
\begin{itemize}
\item all edges $f \in F$ satisfy $\Abs{\Diff_f \hat{\vecc p}} \le \frac{2}{w(\partial T_{\hat p})} + 12\eta$
\item the set $F$ is ``large'', i.e.~ $w(F) = \mu(F) \cdot w(\partial T_{\hat p}) \ge \frac{1}{2} \cdot w(\partial T_{\hat p})$.
\end{itemize}

Now, analogously as in the proof of Theorem \ref{thm:wtshnim}, we get
\[  \vol(T_{\hat p - \frac{2}{w(\partial T_{\hat p})} - 12\eta}) \ge \vol(T_{\hat p}) + \frac{1}{2} \cdot w(\partial T_{\hat p}). \]
Using \ref{eq:wbdbd} of $T_{\hat p}$, this implies
\[  \vol\parens*{T_{\hat p - \frac{2}{c \vol(T_{\hat p})^{1/2 + \ee}} - 12\eta }} \ge \vol(T_{\hat p}) + \frac{c \vol(T_{\hat p})^{1/2 + \ee} }{2}.\]
Iterating this for $(\frac{2}{c}\vol(T_{\hat p})^{1/2 - \ee})$-times, we obtain
\begin{equation}
  \vol\parens*{T_{\hat p - \frac{2}{c^2 \vol(T_{\hat p})^{2\ee}} - \frac{24 \eta \cdot \vol(T_{\hat p})^{1/2 - \ee}}{c}}} \ge 2 \vol(T_{\hat p}).
\label{eq:tdouble}
\end{equation}
Now, similarly we set $\hat p_0 = \hat p(s)$ and inductively extend
this to $\hat p_k$ for $k > 0$ by
\[  \hat p_{k + 1} = \hat p_k - \frac{2}{c^2 \vol(T_{\hat p})^{2\ee}} - \frac{24 \eta \cdot \vol(T_{\hat p})^{1/2 - \ee}}{c}.\]
Since $s \in T_{\hat p_0}$, we know $\vol(T_{\hat p_0}) \ge \deg(s)$,
and by \eqref{eq:tdouble} $\vol(T_{\hat p_k}) \ge 2^k \deg(s)$.
There exists some $k^\star \lesssim \log n$ satisfying,
\begin{equation}
  \vol(G) \ge 2 \vol(T_{\hat p_{k^\star}}) \ge \vol(G)/2.
\end{equation}
It follows that
\[  \hat p_{k^\star} - \hat p_0 = \sum_{j = 0}^{k^\star} \frac{2}{c^2
    \vol(T_{\hat p})^{2\ee}} + \sum_{j = 0}^{k^\star} \frac{24 \eta \cdot \vol(T_{\hat p})^{1/2 - \ee}}{c}.\]
Using the bound $\vol(T_{\hat p_k}) \ge 2^k \deg(s)$ and the geometric sum formula, we have
\[  \hat p_{k^\star} - \hat p_0 \lesssim \sum_{j = 0}^{k^\star}
  \frac{2}{c^2 2^{j2\ee} \deg(s)^{2\ee}} + \sum_{j = 0}^{k^\star}
  \frac{24 \eta \cdot \vol(T_{\hat p})^{1/2 - \ee}}{c} \le
  \frac{1}{\ee c^2 \deg(s)^{2\ee}} + \sum_{j = 0}^{k^\star} \frac{24
    \eta \cdot \vol(T_{\hat p})^{1/2 - \ee}}{c}.\]
Using the naive bounds $\vol(T_{\hat p}) \le m$ and $k^\star \lesssim \log n$,
we obtain
\[  \hat p_{k^\star} - \hat p_0 \lesssim \frac{1}{\ee c^2 \deg(s)^{2\ee}} + \frac{24 \eta \cdot m^{1/2 - \ee}\log n}{c}.\]
Using similar arguments (sending flow from $t$ to $s$), we see that
more than half the volume of the vertices has $\hat{\vecc p}$
potential difference at least
\[ \frac{1}{\ee c^2 \deg(t)^{2\ee}} + \frac{24 \eta \cdot m^{1/2 -
      \ee}\log n}{c}\]
to the vertex $t$. Therefore, we can prove the following potential
difference upper-bound with respect to $\hat{\vecc p}$,
\[  \hat p(s) - \hat p(t) \le \frac{1}{\ee c^2 \deg(s)^{2\ee}} + \frac{1}{\ee c^2 \deg(t)^{2 \ee}} + \frac{48 \eta \cdot m^{1/2 - \ee} \log n}{c}.\]
Using the triangle inequality,
\[ p(s) - p(t) \le \hat p(s) - \hat p(t) + |\hat p(s) - p(s) | + |\hat p(t) - p(t) | \le \hat p(s) - \hat p(t) + 2 \eta.\]
We can conclude that
\begin{eqnarray*}
  p(s) - p(t) & \le & \frac{1}{\ee c^2 \deg(s)^{2\ee}} + \frac{1}{\ee
    c^2 \deg(t)^{2 \ee}} + \frac{48 \eta \cdot m^{1/2 - \ee} \log n}{c} +
  2\eta \\ & = & \frac{1}{\ee c^2 \deg(s)^{2\ee}} + \frac{1}{\ee c^2
    \deg(t)^{2 \ee}} + \frac{\eta (48 m^{1/2 - \ee} \log n + 2c)}{c}.
\end{eqnarray*}
Noting that $p(s) - p(t) = \Reff(s, t)$ completes the proof.
\end{proof}

We use the above result to complete the proof of our algorithmic result, in which we will choose $\eta$.

\begin{corollary}\label{cor:actualsmallcut}
  Let $G = (V, E, \vecc w)$ be a connected weighted graph. If $\deg(v)\geq
  1/\alpha$ for all $v\in V$, then for any $0<\eps<1/2$, there is a
  cut $(U, U^c)$ such that
  \[\Phi(U)  \lesssim \frac{\alpha^\ee}{\sqrt{\rdim\cdot \ee}} \cdot \vol(U)^{\ee - 1/2}.
  \]
  Further, the set $U$ can be found in time $\OOs\parens*{m \cdot
    \log\parens*{\frac{w(E)}{\min_e w(e)}}}$.
\end{corollary}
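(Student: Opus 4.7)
The plan is to mimic the proof of Corollary \ref{cor:smallcut}, but using the robust Theorem \ref{thm:actualwtshnim} in place of Theorem \ref{thm:wtshnim}, and calibrating the Laplacian solver accuracy via Lemma \ref{lem:solveracc} so the extra error term does not spoil the conductance bound. Concretely, I first invoke Lemma \ref{lem:furthpoints} to obtain, in time $\OOs(m)$, a pair $u,v \in V$ with $\Reff(u,v) \ge \rdim/3$. Then, for an accuracy parameter $\eta > 0$ to be fixed below, I use the Spielman-Teng solver with accuracy $\zeta \asymp \eta(\min_e w(e))^2 / (w(E)\sqrt{m})$ as in Lemma \ref{lem:solveracc} to compute a vector $\hat{\vecc p}$ satisfying $|\hat p(w) - p(w)| \le \eta$ for every $w \in V$, where $\vecc p$ is the true $u$-$v$ electric potential vector.

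Next I argue by contraposition. Suppose every threshold cut $T_{\hat p}$ satisfied $\Phi(T_{\hat p}) \ge c/\vol(T_{\hat p})^{1/2-\ee}$ with
\[
c \asymp \sqrt{\frac{\alpha^{2\ee}}{\rdim \cdot \ee}}.
\]
Since $\deg(u),\deg(v) \ge 1/\alpha$, Theorem \ref{thm:actualwtshnim} would then give
\[
\Reff(u,v) \;\le\; \frac{2\alpha^{2\ee}}{\ee\, c^{2}} \;+\; \frac{\eta\,(48\, m^{1/2-\ee}\log n + 2c)}{c}.
\]
The choice of $c$ is calibrated so that the first term is at most $\rdim/6$; it then suffices to pick $\eta$ small enough that the second (error) term is also at most $\rdim/6$. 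Because $\rdim \gtrsim \min_e w(e)/w(E)^{2}$ on a connected graph (one can take any path between the deepest pair), an $\eta$ that is inverse polynomial in $m$, $w(E)$, and $1/\min_e w(e)$ suffices. Combining these two bounds would yield $\Reff(u,v) \le \rdim/3$, contradicting the choice of $u,v$. Therefore some threshold cut $U = T_{\hat p}$ must violate the mild-expansion hypothesis and hence witnesses
\[
\Phi(U) \;\lesssim\; \frac{\alpha^{\ee}}{\sqrt{\rdim \cdot \ee}} \cdot \vol(U)^{\ee-1/2},
\]
as required.

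Algorithmically, once $\hat{\vecc p}$ is in hand I sort the vertices by $\hat p$-value in $\OO(n \log n) = \OOs(m)$ time and perform a single sweep through the $n$ induced threshold cuts, maintaining $\vol_H(T_{\hat p})$ and $w(\partial T_{\hat p})$ incrementally (each vertex crossing contributes $\ddeg(v)$ updates), so the sweep costs $\OO(m)$ total. The dominant cost is the Laplacian solve, which by Lemma \ref{lem:solveracc} runs in $\OOs(m \log(1/\zeta)) = \OOs(m \log(w(E)/\min_e w(e)))$ once $\eta$ is inverse polynomial in the input parameters; sorting, sweeping, and the furthest-pair call of Lemma \ref{lem:furthpoints} are all absorbed into this bound.

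The main obstacle is bookkeeping: verifying that the $\eta$ demanded to make the error term in Theorem \ref{thm:actualwtshnim} negligible compared to $\rdim/c$ is only polynomially small in the natural parameters $m$, $n$, $w(E)$, $1/\min_e w(e)$, $\alpha$, and $1/\ee$, so that the resulting $\log(1/\zeta)$ stays $\OO(\log(w(E)/\min_e w(e)))$ and is hidden by $\OOs(\cdot)$. A secondary point to check is that the sweep only considers threshold cuts with $\vol(T_{\hat p}) \le \vol(G)/2$ (by symmetry the other half contributes analogously via the $t$-side argument already internalized in Theorem \ref{thm:actualwtshnim}), so the cut returned is genuinely the one whose non-expansion was forced by the contrapositive.
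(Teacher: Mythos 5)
Your proposal follows the paper's proof essentially verbatim: find a far pair $u,v$ via Lemma \ref{lem:furthpoints}, argue by contraposition through Theorem \ref{thm:actualwtshnim} with $c$ calibrated so the potential-drop term is $\lesssim \rdim$, pick $\eta$ small enough that the additive error term is also $\lesssim\rdim$, verify $\eta$ (hence $\zeta$) is inverse-polynomial so the Spielman--Teng solve stays within $\OOs(m\log(w(E)/\min_e w(e)))$, and recover $U$ by a sorted sweep over threshold sets. One small inaccuracy to fix: your parenthetical justification of $\rdim \gtrsim \min_e w(e)/w(E)^2$ (``take any path between the deepest pair'') is backwards, since paths give \emph{upper} bounds on effective resistance; the correct lower bound, used by the paper and strictly stronger than yours, is $\rdim \ge 1/w(E)$, obtained from the parallel-resistor/cut bound $\Reff(s,t)\ge 1/w(\partial\{s\}) \ge 1/w(E)$. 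With that substitution the remaining arithmetic goes through exactly as you indicate.
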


\begin{proof}
  The proof will be the same as that of Corollary
  \ref{cor:smallcut}. For Corollary \ref{cor:smallcut}, we used
  Theorem \ref{thm:wtshnim} to get our non-expanding cut, here we will
  use Theorem \ref{thm:actualwtshnim}. By Lemma \ref{lem:furthpoints},
  we can compute vertices $u, v \in V$ which satisfies,
  \[ \Reff(u, v) \ge \rdim/3 \]
  in time $\OOs(m)$. As in Corollary \ref{cor:smallcut}, we pick
  \[ c \asymp \sqrt{\frac{ \frac{1}{\deg(u)^{2\ee}} + \frac{1}{\deg(v)^{2\ee}} }{\rdim \cdot \ee }}. \]
  To contradict the bound from Theorem \ref{thm:actualwtshnim} asymptotically, we need to pick $\eta$ to satisfy
  \[ \frac{1}{\ee c^2} \cdot \parens*{\frac{1}{\deg(u)^{2\ee}} + \frac{1}{\deg(v)^{2\ee}}} \geq \frac{\eta \cdot (48 \cdot m^{1/2 - \ee} \log n + 2c)}{c}.\]
  This will allow us to ignore the additional error term we get in the
  proof of Theorem \ref{thm:actualwtshnim}, by settling for a bigger
  constant that will be hidden in the $\lesssim$-notation.

  By the AM-GM inequality, the choice
  \[ \eta := \frac{1}{\ee \cdot c} \cdot \parens*{\frac{1}{\deg(u)^{2\ee}} + \frac{1}{\deg(v)^{2\ee}}} \cdot \frac{1}{\sqrt{96 \cdot m^{1/2} \log n}}\]
  certainly satisfies this.
  Note that
  \begin{eqnarray*}
    \eta & = & \frac{1}{\ee \cdot c} \cdot
    \parens*{\frac{1}{\deg(u)^{2\ee}} + \frac{1}{\deg(v)^{2\ee}}}
    \cdot \frac{1}{\sqrt{96 \cdot m^{1/2} \log n}}\\
    & \gtrsim & \frac{1}{\ee \cdot c^{3/2}} \cdot \frac{1}{w(E)^{2 \ee}}
      \cdot \frac{1}{\sqrt{m^{1/2} \log n}}\\
    & \asymp & \frac{1}{\ee} \cdot \parens*{\frac{\rdim \cdot
      \ee}{\frac{1}{\deg(u)^{2\ee}} + \frac{1}{\deg(v)^{2\ee}}}}^{3/4} \cdot
    \frac{1}{w(E)^{2 \ee}} \cdot \frac{1}{\sqrt{m^{1/2} \log n}}\\
    & \gtrsim & \frac{1}{\ee} \cdot \parens*{\rdim \cdot \ee \cdot (\min_e w(e))^{2 \ee}}^{3/4} \cdot
    \frac{1}{w(E)^{2 \ee}} \cdot \frac{1}{\sqrt{m^{1/2} \log n}}\\
    & \geq & {\rdim}^{\frac{3}{4}} \cdot \frac{(\min_e w(e))^{3\ee/2}}{w(E)^{2 \ee}} \cdot \frac{1}{\sqrt{m^{1/2} \log n}}.
  \end{eqnarray*}

  Since the smallest possible $\rdim$ is when all the edges act as
  parallel resistors between two vertices, we have
  \[ \rdim \ge \frac{1}{\sum_{e \in E} w(e)} = \frac{1}{w(E)}. \]
  Plugging this into the above computation, we get
  \begin{equation}
    \eta \gtrsim \frac{(\min_e w(e))^{3\ee/2}}{w(E)^{3/4+2 \ee}} \cdot \frac{1}{\sqrt{m^{1/2} \log n}}.
  \end{equation}

  Using this with Lemma \ref{lem:solveracc}, we see that we need to pick
  \begin{equation}
    \zeta \asymp \frac{\eta \cdot (\min_e w(e))^2}{w(E) \cdot \sqrt{m}} \gtrsim \frac{(\min_e w(e))^{2+3\ee/2}}{w(E)^{7/4+2\ee}} \cdot \frac{1}{\sqrt{m^{3/2} \log n}}.
  \end{equation}
  as the accuracy for the Spielman-Teng solver.

  In particular, this means that the Laplacian solver will take time
  \[ \OOs(m \cdot \log(1/\zeta)) = \OOs\parens*{m \cdot \plog(m) + m \cdot \log\parens*{\frac{w(E)}{\min_e w(e)}}}  = \OOs\parens*{m \cdot \log\parens*{\frac{w(E)}{\min_e w(e)}}}.\]
  The rest of the proof follows as in Corollary \ref{cor:smallcut}.
\end{proof}

\end{document}